\newcommand{\Poly}{\ensuremath{\mathcal{P}}}
\newcommand{\MinDT}{\ensuremath{\mathrm{minDT}}}
\newcommand{\MaxDT}{\ensuremath{\mathrm{maxDT}}}
\newcommand{\triang}{\ensuremath{T}}
\newtheorem{theorem}{Theorem}[section]
\newtheorem{prop}[theorem]{Proposition}
\newtheorem{cor}[theorem]{Corollary}
\newtheorem{lemma}[theorem]{Lemma}
\title{The Dual Diameter of Triangulations$^*$} %
\author{
Matias~Korman\thanks{National
Institute of Informatics (NII), Tokyo, Japan. {\tt korman@nii.ac.jp}}
$^,$\thanks{JST, ERATO, Kawarabayashi Large Graph Project.}
\and Stefan~Langerman\thanks{Universit\'e Libre de
Bruxelles (ULB), Brussels, Belgium.
{\tt stefan.langerman@ulb.ac.be}. Directeur de Recherches du FRS-FNRS.}
\and Wolfgang~Mulzer\thanks{Institut f\"ur Informatik, Freie
	Universit\"at Berlin,
 Germany. {\tt mulzer@inf.fu-berlin.de}}
\and Alexander~Pilz\thanks{Institute for Software Technology,
Graz University of Technology, Austria, {\tt [apilz|bvogt]@ist.tugraz.at}}
\and Maria~Saumell\thanks{Department of Mathematics and European Centre of Excellence NTIS (New Technologies for the Information Society), University of West Bohemia, Czech Republic, {\tt saumell@kma.zcu.cz}}
\and Birgit~Vogtenhuber\footnotemark[6]
}
\begin{document}
\maketitle

\begin{abstract}
Let $\Poly$ be a simple polygon with $n$ vertices.
The \emph{dual graph} $\triang^*$ of a triangulation~$\triang$ of~$\Poly$
is the graph whose vertices correspond to the bounded faces of
$\triang$ and whose edges connect those faces of~$\triang$ that share an edge.
We consider triangulations of~$\Poly$ that minimize or maximize the 
diameter of their dual graph. 
We show that both triangulations can be constructed in $O(n^3\log n)$ time
using dynamic programming.
If $\Poly$ is convex, we show that any minimizing triangulation has dual 
diameter exactly $2\cdot\lceil\log_2(n/3)\rceil$ or 
$2\cdot\lceil\log_2(n/3)\rceil -1$, depending on~$n$. Trivially, in this
case any maximizing triangulation has dual diameter $n-2$.
Furthermore, we investigate the relationship between the dual diameter 
and the number of \emph{ears} (triangles with exactly two edges incident 
to the boundary of $\Poly$) in a triangulation. 
For convex $\Poly$, we show that there is always a triangulation that
simultaneously minimizes the dual diameter and maximizes the number of 
ears.
In contrast, we give examples of general simple polygons where every 
triangulation that maximizes the number of ears has dual diameter that 
is quadratic in the minimum possible value.
We also consider the case of point sets in general position in the plane. 
We show that for any such set of $n$ points there are triangulations with 
dual diameter in~$O(\log n)$ and in~$\Omega(\sqrt n)$.
\end{abstract}




\section*{Foreword}
Research on this topic was initiated at the \emph{Brussels Spring 
Workshop on Discrete and Computational Geometry}, which took place 
May 20--24, 2013.  The authors would like to thank all the participants in general and  Ferran Hurtado in particular. Ferran participated in the early stages of the discussion, but modestly decided not to be an author of this paper. To us he has been a teacher, supervisor, advisor, mentor, colleague, coauthor, and above all: a friend. We are very grateful that he was a part of our lives.

\section{Introduction}
Let $\Poly$ be a simple polygon with $n>3$ vertices. 
We regard $\Poly$ as a closed two-dimensional subset of the plane, 
containing its boundary. A \emph{triangulation} $\triang$ of~$\Poly$ is a 
maximal crossing-free geometric (i.e., straight-line) graph whose 
vertices are the vertices of~$\Poly$ and whose edges lie inside~$\Poly$. 
Hence, $\triang$ is an outerplanar graph. Similarly, for a set $S$ of 
$n$ points in the plane, a \emph{triangulation} $\triang$ of~$S$ is a maximal 
crossing-free geometric graph whose vertices are exactly the points of~$S$.
It is well known that in both cases all bounded faces of $\triang$ are triangles.
The \emph{dual graph} $\triang^*$ of $\triang$ is the graph 
with a vertex for each bounded face of $\triang$ and an edge between 
two vertices if and only if the corresponding triangles share an edge in 
$\triang$. If all vertices of $\triang$ are incident to the unbounded face,
then $\triang^*$ is a tree. An \emph{ear} in a 
triangulation of a simple polygon is a triangle whose vertex in the dual graph is a leaf (equivalently, two out of its three edges are edges of $\Poly$).
We call the diameter of the dual graph $\triang^*$ the \emph{dual diameter 
(of the triangulation $\triang$)}.
In the following, we will study combinatorial and algorithmic properties
of
\emph{minimum} and \emph{maximum dual diameter triangulations} 
for simple polygons and for planar point sets
(\MinDT{}s and \MaxDT{}s for short).
Note that both triangulations need not to be unique.

\paragraph{Previous Work}
Shermer~\cite{s-cbtt-91} considers \emph{thin} and \emph{bushy} 
triangulations of simple polygons, i.e., triangulations that minimize or 
maximize the number of ears.
He presents algorithms for computing a thin triangulation in time
$O(n^3)$ and a bushy triangulation in time $O(n)$.
Shermer also claims that bushy triangulations are useful for finding paths 
in the dual graph, as is needed, e.g., in geodesic algorithms.
In that setting, however, the running time is not actually determined 
by the number of ears, but by the dual diameter of the triangulation.
Thus, bushy triangulations are only useful for geodesic problems if 
there is a connection between maximizing the number of ears and minimizing
the dual diameter.
While this holds for convex polygons, we show that, in general, there 
exist polygons for which no $\MinDT$ maximizes the number of ears.
Moreover, we give examples where forcing a single ear into a 
triangulation may almost double the dual diameter, and the dual diameter 
of any bushy triangulation may be quadratic in the dual diameter of a 
$\MinDT$.

The dual diameter also plays a role in the study of edge flips: 
given a triangulation $T$, an \emph{edge flip} is the operation of replacing
a single edge of $T$ with another one so that the resulting graph
is again a valid triangulation.
In the case of convex polygons, edge flips correspond to rotations in the
dual binary tree~\cite{sleator_tarjan_thurston}.
For this case, Hurtado, Noy, and 
Urrutia~\cite{hurtado_noy_urrutia, urrutia_flip_talk} show that a 
triangulation with dual diameter $k$ can be transformed into a fan 
triangulation by a sequence of most $k$ parallel flips (i.e., two 
edges not incident to a common  triangle may be flipped simultaneously).
They also obtain a triangulation with logarithmic dual diameter 
by recursively cutting off a linear number of ears.

While we focus on the dual graph of a triangulation, distance
problems in the primal graph have also been considered.
For example, Kozma~\cite{kozma} addresses the problem of finding a 
triangulation 
that minimizes the total link distance over all vertex pairs.
For simple polygons, he gives a sophisticated $O(n^{11})$ time 
dynamic programming algorithm.
Moreover, he shows that the problem is strongly NP-complete for 
general point sets when arbitrary edge weights are allowed
and the length of a path is measured as the sum of the
weights of its edges.

\paragraph{Our Results}
In Section~\ref{sec_ears}, we present several properties of the dual 
diameter for triangulations of simple polygons.
Among other results, we calculate the exact dual diameter of $\MinDT$s and 
$\MaxDT$s of convex polygons, which
can be obtained by 
maximizing and minimizing the number of ears of the 
triangulation, respectively.
On the other hand, we show that there exist simple polygons where 
the dual diameter of any $\MinDT$ is $O(\sqrt{n})$, while that 
of any triangulation that maximizes the number of ears is in $\Omega(n)$.
Likewise, there exist simple polygons where the dual diameter of any 
triangulation that minimizes the number of ears is in $O(\sqrt{n})$, 
while the maximum dual diameter is linear.
In Section~\ref{sec_poly}, we present efficient algorithms to construct a 
$\MinDT$ and a $\MaxDT$ for any given simple polygon.

Finally, in Section~\ref{sec_points} we consider the case of planar
point sets, showing that for any point set in the plane in general position there
are triangulations with dual diameter in
$O(\log n)$ and in~$\Omega(\sqrt{n})$, respectively.

\section{The Number of Ears and the Diameter}\label{sec_ears}

The dual graph of any triangulation $\triang$ has maximum degree $3$. 
In this case,
the so-called \emph{Moore bound} implies that the dual diameter of 
$\triang$ is at least $\log_2(\frac{t+2}{3})$, where $t$ is the number of 
triangles in $\triang$ (see, e.g.,~\cite{moore_survey}).
For convex polygons, we can compute the minimum dual diameter exactly.

\begin{prop}\label{prop_convex}
Let $\Poly$ be a convex polygon with $n \geq 3$ vertices, and let
$m \geq 1$ such that
$n \in \{3\cdot 2^{m-1} + 1, \dots, 3 \cdot 2^m\}$. 
Then any 
\MinDT\ of $\Poly$  has dual diameter
  $2\cdot\lceil\log_2(n/3)\rceil -1$ if
  $n \in \{3\cdot 2^{m-1} + 1, \dots, 4 \cdot 2^{m-1}\}$,
  and
  $2\cdot\lceil\log_2(n/3)\rceil$ if
  $n \in \{4 \cdot 2^{m-1} + 1, \dots, 3\cdot 2^m\}$, for
  some $m \geq 1$.
\end{prop}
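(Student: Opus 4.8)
The plan is to translate the statement into a purely combinatorial question about trees and then to solve that question by a node-counting argument together with an explicit balanced construction. Since $\Poly$ is convex, every vertex of $\triang$ lies on the boundary, so (as noted in the excerpt) $\triang^*$ is a tree; moreover it has exactly $n-2$ nodes, one per triangle, and maximum degree $3$, since each triangle has three edges. Thus the dual diameter of \emph{any} triangulation is the diameter of some tree on $n-2$ nodes with maximum degree $3$, and conversely I will need to realize the extremal such tree as an actual triangulation. The whole proposition therefore reduces to determining the minimum possible diameter $D(N)$ of a tree with $N=n-2$ nodes and maximum degree at most $3$.

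For the lower bound I would bound, for each target diameter $d$, the maximum number $g(d)$ of nodes a max-degree-$3$ tree of diameter at most $d$ can have. Rooting the tree at its center (a vertex if $d=2r$ is even, an edge if $d=2r+1$ is odd) and observing that the root may have $3$ children while every other internal node has at most $2$ children, a geometric-series count gives $g(2r)=3\cdot 2^{r}-2$ and $g(2r+1)=4\cdot 2^{r}-2$. Since $\triang^*$ has $n-2$ nodes, its diameter $d$ must satisfy $g(d)\ge n-2$, which yields $d\ge D(n-2):=\min\{d':g(d')\ge n-2\}$; this is the lower bound for every \MinDT.

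For the matching upper bound I would give a recursive \emph{balanced} triangulation. Choose a central triangle all three of whose edges are diagonals, splitting $\Poly$ into three subpolygons of roughly $n/3$ vertices each; this is what produces the factor $n/3$ in the statement. Each subpolygon carries one distinguished boundary edge, namely the diagonal just created, and I would triangulate it so that its dual is a balanced binary tree rooted at that edge, recursively halving the subpolygon by a diagonal at each step. The three branches then have depth about $\lceil\log_2(n/3)\rceil$, so the resulting dual tree has diameter exactly $D(n-2)$; the only care needed is to distribute the few extra vertices (when $n$ is not of the form $3\cdot2^{k}$) among the subtrees so as not to increase the depth, which is where the two cases $2\lceil\log_2(n/3)\rceil-1$ and $2\lceil\log_2(n/3)\rceil$ come from.

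It remains to turn $D(n-2)$ into the stated closed form, which is a routine but careful calculation. Writing $d=2k-1$ one checks that $g(2k-1)\ge n-2$ holds exactly for $n\le 4\cdot2^{k-1}$ and $g(2k-2)<n-2$ exactly for $n>3\cdot2^{k-1}$, so $D(n-2)=2k-1$ precisely on $\{3\cdot2^{k-1}+1,\dots,4\cdot2^{k-1}\}$; similarly $D(n-2)=2k$ precisely on $\{4\cdot2^{k-1}+1,\dots,3\cdot2^{k}\}$. On both ranges $\lceil\log_2(n/3)\rceil=k$, so the two values equal $2\lceil\log_2(n/3)\rceil-1$ and $2\lceil\log_2(n/3)\rceil$ respectively, matching the proposition with $m=k$. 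The main obstacle I anticipate is the upper-bound construction: I must ensure that the recursive splitting always keeps the maximum degree at $3$ and that the uneven distribution of vertices never forces the diameter above $D(n-2)$; the counting and the final arithmetic are then mechanical.
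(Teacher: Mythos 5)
Your proposal is correct, and it differs from the paper's proof only in how the lower bound is established; the reduction to max-degree-$3$ trees on $n-2$ nodes and the upper-bound construction (a central triangle with three balanced binary subtrees, with leftover vertices distributed so as not to create two deep branches) are essentially the paper's triangulations $\triang_1$ and $\triang_2$. For the lower bound, the paper argues by contradiction: a hypothetical dual tree of too-small diameter $k$ is extended, by adding vertices around the midpoint of a longest path, into a tree with the same structure as $\triang_1^*$ or $\triang_2^*$ on $n'-2 > n-2$ nodes and diameter still $k$, and monotonicity of the upper-bound formula then yields the contradiction. You instead root an arbitrary candidate tree of diameter $d$ at its center (a vertex or an edge according to the parity of $d$) and count nodes directly, obtaining $g(2r)=3\cdot 2^r-2$ and $g(2r+1)=4\cdot 2^r-2$, so the minimum feasible diameter is the least $d$ with $g(d)\ge n-2$; your subsequent arithmetic (the thresholds $n\le 4\cdot 2^{k-1}$, $n>3\cdot 2^{k-1}$, and the identification $\lceil\log_2(n/3)\rceil=k$ on both ranges) checks out. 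Your counting is a parity-refined version of the Moore bound, which the paper invokes only in its generic form ($\log_2\frac{t+2}{3}$) before the proposition and does not actually use inside the proof; this makes your lower bound cleaner and more self-contained, whereas the paper's extension step ("add vertices preserving diameter and structure") is left informal. What the paper's route buys is economy: the same two extremal constructions serve both as upper-bound witnesses and as the comparison objects for the lower bound. The one point you should make explicit is the same point the paper handles concretely (by cutting off all ears in two of the three sub-polygons): in the case $n\le 4\cdot 2^{m-1}$, at most one of the three subtrees hanging off the central triangle may reach depth $m-1$, since two depth-$(m-1)$ leaves in different subtrees would force diameter $2m$; your phrase "distribute the few extra vertices so as not to increase the depth" is exactly this requirement and deserves to be spelled out.
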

\begin{proof} 
The dual graph of any triangulation of~$\Poly$ is a tree with $n-2$ 
vertices and maximum degree $3$; see \figurename~\ref{fig_convex_tree} for 
an example.
Furthermore, every tree with $n-2$ vertices and maximum degree $3$ 
is dual to some triangulation of~$\Poly$.

For the upper bound, suppose first that $n = 3\cdot 2^m$, for some 
$m \geq 1$. We define a triangulation $\triang_1$ as follows.
It has a central triangle that splits $\Poly$ into three 
sub-polygons, each with $2^m$ edges on the boundary. 
For each sub-polygon, the dual tree for $\triang_1$ 
is a full binary tree of height 
$m-1$ with $2^{m-1}$ leaves; see 
\figurename~\ref{fig_convex_full}.
The leaves of~$\triang_1^*$ correspond to the ears of~$\triang_1$.
The shortest path between any two ears in two different 
sub-polygons has length exactly
$2(m-1) + 2 = 2 \log_2(n/3)$. The shortest path between any two ears in 
the same sub-polygon has length at most $2(m-1)$.
Thus, the dual diameter of~$\triang_1$ is $2 \log_2(n/3)$.

Now let $n \in \{3 \cdot \frac{4}{3} \cdot 2^{m-1} + 1, \dots,  
3\cdot 2^m-1\},$ and consider the triangulation $\triang_2$ of~$\Poly$ 
obtained by cutting off $3\cdot 2^m -n \leq 2 \cdot 2^{m-1} - 1$ 
ears that are consecutive in the convex hull from $\triang_1$. Then $\triang_2^*$ is a subtree
of~$\triang_1^*$. Since $\triang_1$ has $3\cdot 2^{m-1}$ 
ears, $\triang_2$ has at least $2^{m-1}+1$ ears in common with $\triang_1$.
Two of them lie in different sub-polygons, so the dual diameter remains 
$2m = 2\cdot\lceil\log_2(n/3)\rceil$.

Finally, for 
$n \in \{3\cdot 2^{m-1} + 1, \dots, 3 \cdot \frac{4}{3} \cdot 2^{m-1}\}$, 
if we remove  $3\cdot 2^m - n \leq  3\cdot2^{m-1}-1$ ears from $\triang_1$
such that all ears in two of the sub-polygons are removed, we get
a triangulation with dual 
diameter $2m-1=2\cdot\lceil\log_2(n/3)\rceil-1$; see 
\figurename~\ref{fig_convex_sparse}.
	
\begin{figure}[htb]
\centering
\subfigure[]{\label{fig_convex_tree}\includegraphics[scale=0.55, page=3]{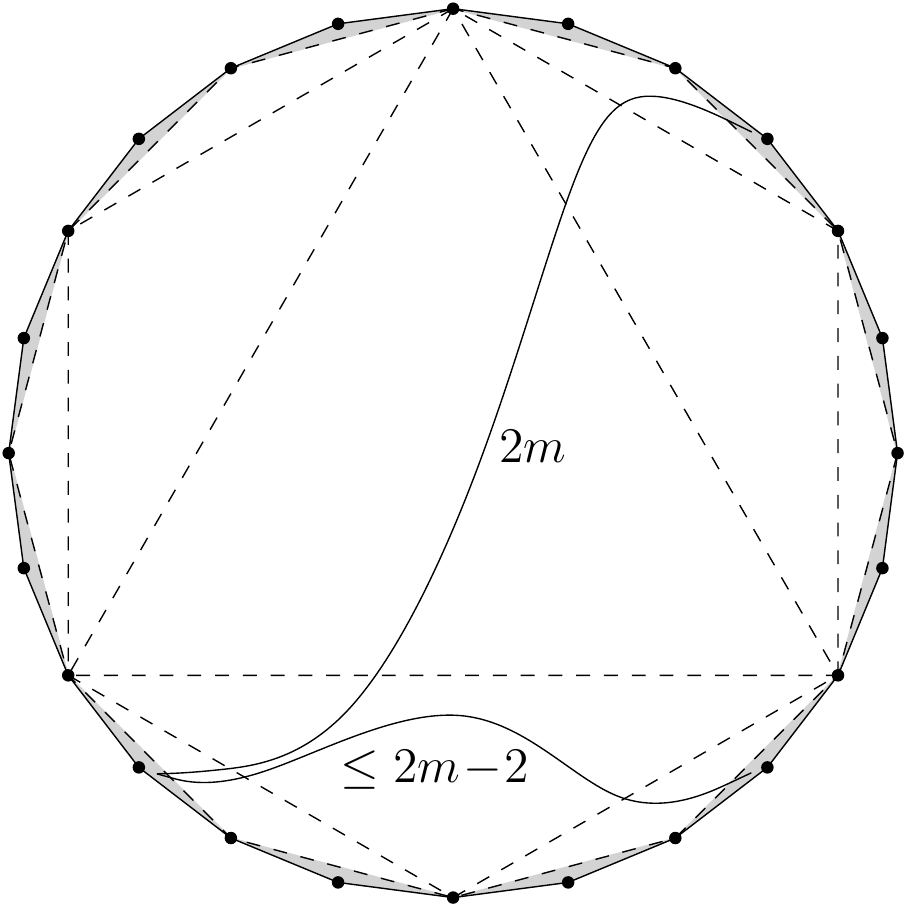}} \hspace{0.3cm}
\subfigure[]{\label{fig_convex_full}\includegraphics[scale=0.55, page=1]{fig_convex}} \hspace{0.3cm}
\subfigure[]{\label{fig_convex_sparse}\includegraphics[scale=0.55, page=2]{fig_convex}}
\caption{The convex case. 
	(a) A triangulation and its dual tree. The ears 
	are gray. 
	(b) 
	The triangulation $T_1$ for 
	$m=3$. The 
	central triangle creates sub-polygons with $2^m$ edges of $\Poly$ each.
	Any path between ears in different sub-polygons has length $2m$.
	Other paths are shorter.
	(c) 
	The triangulation $T_2$ for $4\cdot 2^{m-1}$ vertices 
	($m=3$). The central 
	triangle creates three sub-polygons, one with
	$2^{m}$ edges of $\Poly$ and two with  $2^{m-2}$ edges of $\Poly$.
}
\label{fig_convex}
\end{figure}

For the lower bound, assume there is a tree $\triang^*$ with 
$n-2$ vertices, maximum degree $3$, and diameter $k$ strictly smaller 
than in the proposition. 
Consider a longest path $\pi$ in $\triang^*$ and a vertex $v$ on 
$\pi$ for which 
the distances to the endpoints of $\pi$ differ by at most one. 
By adding vertices, we can turn $\triang^*$  into a tree with 
$n' -2 > n-2$ vertices, diameter $k$, and the same structure as 
$\triang_1^*$ or $\triang_2^*$ for a convex polygon with $n'$
vertices
(with $v$ as central vertex).  
Since the upper 
bound on the dual diameter grows monotonically, this means that
the triangulation $T_1$ or $T_2$ for a convex polygon with $n$
vertices has diameter $k$, a contradiction.
\end{proof}

As the dual graph of a triangulation of \emph{any} simple polygon has 
maximum degree $3$, the proof of Proposition~\ref{prop_convex}
yields the following corollary.
\begin{cor}\label{cor_lb}
Let $\Poly$ be a simple polygon with $n \geq 3$ vertices, and let
$m \geq 1$ such that
$n \in \{3\cdot 2^{m-1} + 1, \dots, 3 \cdot 2^m\}$. 
The dual diameter of 
any triangulation of $\Poly$ is at least 
$2\cdot\lceil\log_2(n/3)\rceil -1$ if
$n \in \{3\cdot 2^{m-1} + 1, \dots, 4 \cdot 2^{m-1}\}$,
and 
$2\cdot\lceil\log_2(n/3)\rceil$ if 
$n \in \{4 \cdot 2^{m-1} + 1, \dots, 3\cdot 2^m\}$. 
\end{cor}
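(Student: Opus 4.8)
The plan is to observe that the lower bound established within the proof of Proposition~\ref{prop_convex} is purely combinatorial: it never uses convexity of $\Poly$, only the structure of the dual graph. First I would recall that for \emph{any} simple polygon, every vertex of a triangulation $\triang$ is a vertex of $\Poly$ and hence lies on the boundary, so all vertices are incident to the unbounded face. As noted in the introduction, this forces $\triang^*$ to be a tree. Since $\Poly$ has $n$ vertices, every triangulation has exactly $n-2$ triangles, so $\triang^*$ is a tree on $n-2$ vertices; moreover its maximum degree is $3$, because each triangle shares an edge with at most three neighbouring triangles.

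Next I would revisit the lower-bound portion of the proof of Proposition~\ref{prop_convex}. That argument starts from an arbitrary tree on $n-2$ vertices with maximum degree $3$ and diameter $k$ strictly below the claimed bound, balances a longest path about a central vertex~$v$, and then pads the tree into a larger one of the same diameter $k$ having the shape of $\triang_1^*$ or $\triang_2^*$ for a convex polygon with $n' > n$ vertices. Invoking monotonicity of the established upper bound then produces a contradiction. Crucially, this reasoning manipulates $\triang^*$ solely as an abstract tree satisfying the two properties above; at no point does it appeal to the geometry of a convex polygon.

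Therefore the same argument applies verbatim to the dual tree of any triangulation of an arbitrary simple polygon, yielding exactly the stated lower bound on its dual diameter. I do not anticipate a genuine obstacle here, since the substantive work is already contained in Proposition~\ref{prop_convex}. The only point that truly requires verification is that the dual graph of a triangulation of a simple polygon is \emph{always} a tree of maximum degree $3$ on $n-2$ vertices, which is precisely the combinatorial setting in which the lower bound was derived; once this is confirmed, the corollary follows immediately.
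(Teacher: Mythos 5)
Your proposal is correct and matches the paper's own reasoning exactly: the paper derives the corollary by observing that the dual graph of a triangulation of any simple polygon is a tree on $n-2$ vertices with maximum degree $3$, and that the lower-bound argument in the proof of Proposition~\ref{prop_convex} uses only these combinatorial properties, never convexity. Nothing further is needed.
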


Proposition~\ref{prop_convex} also shows that if $\Poly$ is convex, 
there exists a $\MinDT$ with a maximum number of ears. 
Next, we show that this does not hold for general simple polygons.
Hence, any approach that tries to construct $\MinDT$s by maximizing the 
number of ears is doomed to fail.

\begin{prop}\label{prop_leaves}
For arbitrarily large $n$, there
exist simple polygons with $n$ vertices in which any $\MinDT$ minimizes the 
number of ears. 
\end{prop}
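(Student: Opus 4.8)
\emph{Plan.} The natural intuition---made precise for convex polygons in Proposition~\ref{prop_convex}---is that a small dual diameter goes hand in hand with \emph{many} ears, since a bushy, well-balanced dual tree simultaneously has short root-to-leaf paths and a large number of leaves. To prove the proposition I therefore have to exhibit a polygon whose geometry \emph{defeats} this correlation, i.e.\ one in which producing an additional ear is possible only at the price of a strictly larger dual diameter. The plan is to assemble $\Poly_n$ from two ingredients: a long \emph{backbone}, a thin serpentine corridor every triangulation of which is forced to be a path of length $\Theta(n)$, and a family of $k$ \emph{shortcut gadgets} attached along the backbone, with $k=o(n)$. Each gadget is to admit exactly two relevant triangulations: a \emph{bridging} mode, which uses a diagonal that short-circuits a backbone stretch of length $\Theta(n/k)$ (so that the corresponding triangles become interior, degree-two nodes of the dual tree and contribute \emph{no} leaf), and an \emph{ear} mode, which instead cuts off a small triangle as a leaf but forgoes the short-circuit. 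Thus bridging buys a large reduction of the dual diameter while suppressing a potential ear, whereas the ear mode gains one ear but re-lengthens the backbone.

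With this construction the argument would proceed as follows. First I would show that if \emph{every} gadget is in bridging mode, the dual tree has some diameter $D^\ast$ and that its only leaves are a fixed minimum set of ears forced by the boundary structure of $\Poly_n$, giving an ear count $E_{\min}$. Second---and this is the heart of the proof---I would establish a quantitative trade-off lemma: switching any single gadget from bridging to ear mode destroys one short-circuit, forcing a path through the associated $\Theta(n/k)$ backbone stretch and hence increasing the dual diameter by strictly more than it could be decreased anywhere else. Consequently every triangulation that is not in all-bridging mode has dual diameter strictly larger than $D^\ast$, so that all-bridging is the only mode realizing a \MinDT. Since an all-bridging triangulation has exactly $E_{\min}$ ears, while switching $j\ge 1$ gadgets to ear mode yields $E_{\min}+j$ ears, every \MinDT attains the minimum number of ears, which is the claim (and, as a by-product, no \MinDT maximizes it).

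The main obstacle is the geometric realization of the gadget and, above all, making the trade-off lemma come out with the correct sign. Locally, cutting an ear tends to \emph{shorten} dual paths---a leaf sits at depth one---so a naive, small gadget would actually reward ear creation with a smaller diameter, exactly the opposite of what is needed. The construction must therefore force the bridging diagonal to span a backbone portion whose length $\Theta(n/k)$ grows, so that losing a single bridge reroutes distances along that entire portion and the gain of one shallow ear is dwarfed by the induced detour; choosing $k=\Theta(\sqrt n)$ is the natural balance and is consistent with the $O(\sqrt n)$ diameter expected of a \MinDT here. Realizing such long-range shortcuts inside a \emph{simple} polygon requires reflex chains that block all diagonals except the intended bridging and ear diagonals, and then checking that the $k$ gadgets can be made simultaneously bridgeable without mutual interference. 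I expect the bulk of the work to lie in choosing the coordinates (equivalently, the reflex-vertex placement) that simultaneously guarantee the $\Theta(n)$ diameter of the bridgeless backbone, the exact two-triangulation behaviour of each gadget, and the strict, quantitatively dominant diameter penalty of the ear mode; the ear-count bookkeeping and the scaling to arbitrarily large $n$ should then be routine.
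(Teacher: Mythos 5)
There is a genuine gap: your submission is a plan, not a proof. The entire mathematical content of the proposition lies in exhibiting a concrete simple polygon and verifying its triangulation structure, and this is precisely what you defer (``I expect the bulk of the work to lie in choosing the coordinates\dots''). Worse, the step you leave open is exactly the crux. You correctly observe that a naive gadget has the \emph{wrong} sign---cutting an ear locally tends to shorten dual paths---so the whole argument stands or falls with the signed trade-off lemma, which is never established. In addition, your multi-gadget architecture creates obligations you only name but do not discharge: that each gadget admits \emph{exactly} two relevant triangulations (a bridging diagonal of length $\Theta(n/k)$ can only exist if the polygon is geometrically ``fat'' there, which tends to admit further diagonals), that every triangulation of the global polygon decomposes into independent gadget modes, that the all-bridging triangulation attains the minimum ear count over \emph{all} triangulations, and that a local exchange argument controls the globally defined diameter. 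None of these is routine, and without them no polygon with the claimed property has been produced.

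It is also worth noting that the proposition needs far less machinery than you set up. The paper's proof uses a single small gadget: a polygon with $n=4k+8$ vertices in which every triangulation has either $4$ or $5$ ears, the unique $5$-ear triangulation has dual diameter $4k+2$, while dropping the one large ear permits a $4$-ear triangulation of dual diameter $2k+3$. Hence any \MinDT\ has $4$ ears, the minimum, and the statement follows immediately. Your backbone-plus-$\Theta(\sqrt n)$-gadgets design (with the claimed $O(\sqrt n)$ optimal diameter versus a penalty per broken bridge) is essentially the structure the paper builds \emph{later}, by concatenating $\Theta(\sqrt n)$ copies of this single gadget, to prove the stronger separation of Theorem~\ref{thm_leaves} ($O(\sqrt n)$ versus $\Omega(n)$). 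So your instinct about the right global architecture is sound, but for this proposition it is over-engineered, and in either case the construction itself---the actual proof---is missing.
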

\begin{proof}
Let $k\geq 1$ and consider the polygon $\Poly$ with $n = 4k + 8$ vertices 
sketched in \figurename~\ref{fig_ears1}.
Any triangulation of $\Poly$ has either $4$ or $5$ ears.
The triangulation in \figurename~\ref{fig_ears1_5} is the only 
triangulation with 5 ears, and it has dual diameter $4k+2$.
However, as depicted in \figurename~\ref{fig_ears1_4good}, omitting the 
large ear at the bottom allows a triangulation with $4$ ears and dual 
diameter $2k+3$.
Thus, forcing even one additional ear may nearly double the dual diameter. 
\end{proof}

Figure~\ref{fig_ears1_4bad} shows a triangulation
of $\Poly$ with $4$ ears and almost twice the 
diameter as in \figurename~\ref{fig_ears1_4good}. Thus, neither for 
minimizing the diameter nor for maximizing the number of ears this 
triangulation is desirable. 
However, it has the nice property that the 
two top ears are connected by a dual path with four interior vertices.
Below, this property will be useful when making a larger construction.

\begin{figure}[htb]
\centering
\subfigure[5 ears, dual diameter $4k+2$.]{\label{fig_ears1_5}\includegraphics[scale=0.5, page=1]{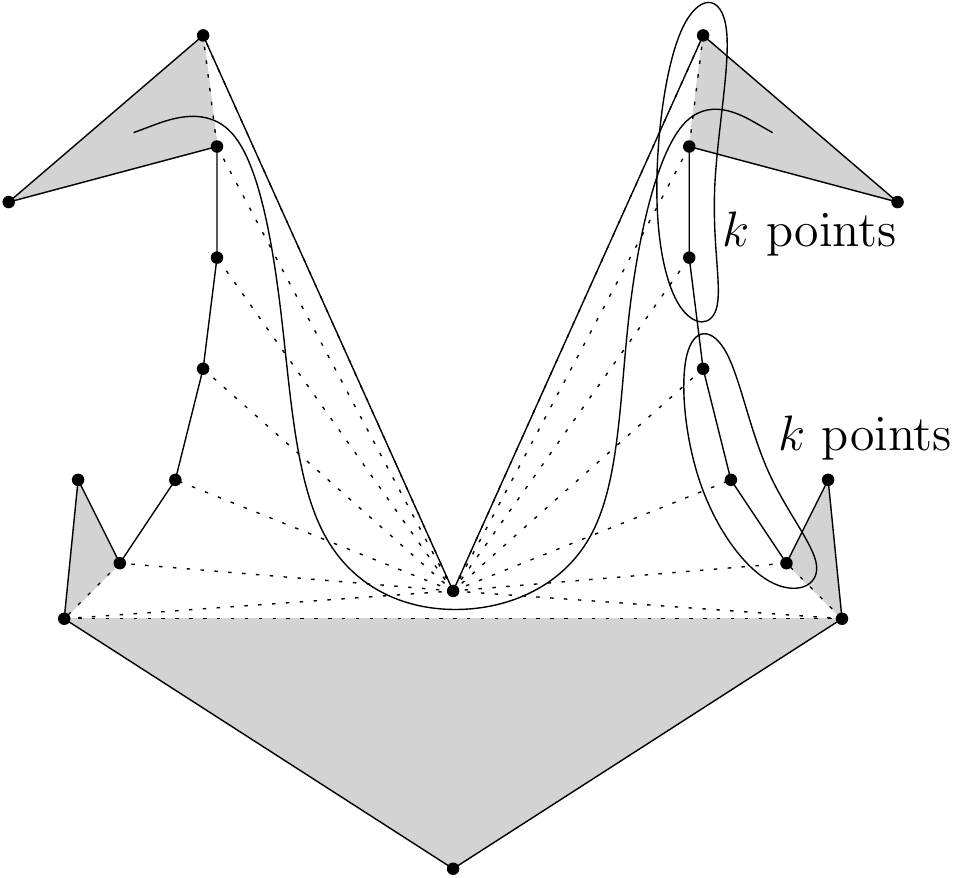}}\hspace{0.4cm}
\subfigure[4 ears, dual diameter $2k+3$]{\label{fig_ears1_4good}\includegraphics[scale=0.5, page=2]{fig_ears_new}}\hspace{0.4cm}
\subfigure[4 ears, dual diameter $4k+3$]{\label{fig_ears1_4bad}\includegraphics[scale=0.5, page=3]{fig_ears_new}}
\caption{Three triangulations of a polygon with $n= 4k + 8$ vertices 
  $(k=3)$ and paths that define their dual diameters.
  The ears are shaded.}
\label{fig_ears1}
\end{figure}

\begin{theorem}\label{thm_leaves}
For arbitrarily large $n$, there 
is a simple polygon with $n$ vertices that has minimum dual diameter
$O(\sqrt{n})$ while any triangulation
that maximizes the number of 
ears has dual diameter $\Omega(n)$. 
\end{theorem}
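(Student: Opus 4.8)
The plan is to amplify the single-gadget phenomenon of Proposition~\ref{prop_leaves} by chaining together $t=\Theta(\sqrt n)$ copies of the gadget, each of size $s=\Theta(\sqrt n)$, into one simple polygon with $\Theta(n)$ vertices. Each copy $G_i$ has two distinguished top ears, a \emph{left} and a \emph{right} one, and I would glue the right top of $G_i$ to the left top of $G_{i+1}$ through a small, rigid connecting region, so that the copies form a single horizontal chain. The design goal is that each $G_i$ behaves exactly as in Proposition~\ref{prop_leaves} and can be triangulated independently of the others.

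First I would fix the gadget and record the two behaviors of its dual that the construction exploits, measuring the \emph{top-to-top} dual distance, i.e., the distance in the dual tree between a gadget's own left and right top ears. In the unique $5$-ear triangulation this distance is long, $\Theta(s)$: I would arrange the gadget so that the path realizing the dual diameter $4k+2$ in \figurename~\ref{fig_ears1_5} runs between the two top ears. In the $4$-ear triangulation of \figurename~\ref{fig_ears1_4bad} the top-to-top distance is instead short, with only four interior vertices, even though the full gadget diameter remains $\Theta(s)$. These are precisely the two regimes flagged after Proposition~\ref{prop_leaves}, and the short top-to-top path is the shortcut that drives the upper bound.

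For the upper bound I would triangulate every $G_i$ as in \figurename~\ref{fig_ears1_4bad}. Then the left-to-right traversal across each gadget costs only $O(1)$, so the \emph{spine} obtained by concatenating these top-to-top shortcuts together with the connecting regions is a dual path of length $\Theta(t)=\Theta(\sqrt n)$. The remainder of each gadget's dual hangs off the spine as a subtree of depth $O(s)=O(\sqrt n)$. Any dual path therefore descends into at most two gadgets and runs along the spine in between, for total length $O(s)+O(t)+O(s)=O(\sqrt n)$; this exhibits a triangulation of dual diameter $O(\sqrt n)$, hence minimum dual diameter $O(\sqrt n)$.

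For the lower bound I would argue in two steps. Because the gadgets are glued through rigid regions, the number of ears is additive over the gadgets, and each gadget attains its maximum of five ears only in its unique $5$-ear triangulation; hence any triangulation maximizing the total number of ears must realize the $5$-ear configuration in \emph{every} $G_i$. In that configuration the long top-to-top paths of consecutive gadgets meet at the connecting regions and concatenate into one dual path of length $\sum_i \Theta(s)=\Theta(ts)=\Theta(n)$, so the dual diameter is $\Omega(n)$. The main obstacle is exactly this second step and its coexistence with the upper bound: the interface must be engineered so that in the ear-maximizing regime the unique long path is forced to enter through one neighbor and leave through the other, making concatenation unavoidable, while in the diameter-minimizing regime the same interface offers the short top-to-top shortcut. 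Guaranteeing that both behaviors live in a single fixed polygon, and that the ear count is genuinely additive so that dropping a gadget to four ears cannot be ``paid for'' by an extra ear elsewhere, is the delicate heart of the proof.
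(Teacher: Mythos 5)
Your proposal follows essentially the same route as the paper: concatenate $c=\Theta(\sqrt n)$ copies of the gadget from Proposition~\ref{prop_leaves}, note that ear-maximization forces the unique $5$-ear triangulation in every copy so that the long per-copy paths concatenate into a dual path of length $\Omega(n)$, and use the $4$-ear triangulation of \figurename~\ref{fig_ears1_4bad} (with its short top-to-top dual path) to build an $O(\sqrt n)$-diameter spine for the upper bound. The only cosmetic difference is that the paper uses the triangulation of \figurename~\ref{fig_ears1_4good} in the two end copies rather than \figurename~\ref{fig_ears1_4bad} everywhere, which changes nothing asymptotically.
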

\begin{proof}
Let $c$ be a parameter to be determined later, and
let $\Poly'$ be the polygon constructed in Proposition~\ref{prop_leaves}. 
We construct a polygon $\Poly$ by concatenating $c$ copies of $\Poly'$ 
as in \figurename~\ref{fig_ears2}.
$\Poly$ has $n=c(4k+4)+4$ vertices.
Using the triangulation from \figurename~\ref{fig_ears1_5} for each copy,
we obtain a triangulation with the maximum number $3c+2$ of ears and dual 
diameter $c(4k+1)+1$ (the curved line in \figurename~\ref{fig_ears2} 
indicates a longest path).
On the other hand, using the triangulation from 
\figurename~\ref{fig_ears1_4good} for the leftmost and rightmost part of 
the polygon and the one from \figurename~\ref{fig_ears1_4bad} for all 
intermediate parts yields a triangulation with dual diameter $4c+4k-3$ 
that has only $2c+2$ ears.
For $c=k$, we obtain $c,k=\Theta(\sqrt{n})$.
Thus, the dual diameter for the triangulation with maximum number of ears 
is $\Theta(n)$, while the optimal dual diameter is $O(\sqrt{n})$.
\end{proof}

\begin{figure}[htb]
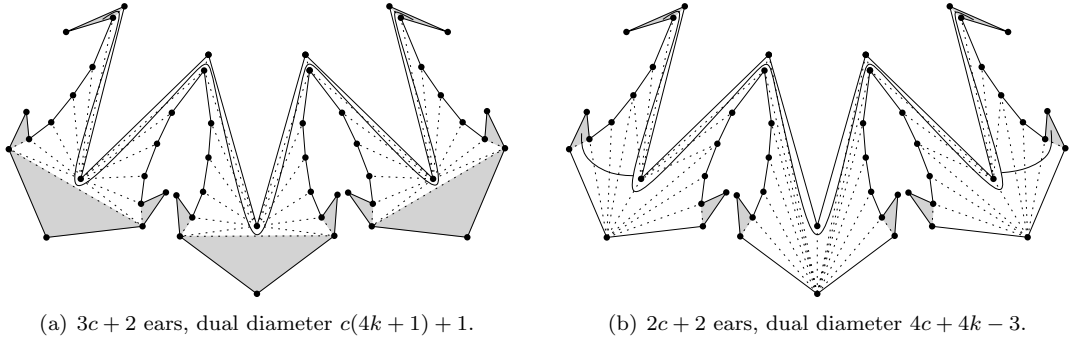

\centering
\subfigure[$3c+2$ ears, dual diameter $c(4k + 1) + 1$.] {\includegraphics[width=.45\columnwidth, page=4]{fig_ears_new}} \hspace{0.5cm} 
\subfigure[$2c+2$ ears, dual diameter $4c+4k-3$.] {\includegraphics[width=.45\columnwidth, page=5]{fig_ears_new}}
\caption{Two triangulations of a polygon with $n=c(4k+4)+4$ vertices $(c=k=3)$ 
and corresponding longest paths. 
The ears are shaded.}
\label{fig_ears2}
\end{figure}

Similarly, for maximizing the dual diameter, we can give examples 
where the dual diameter is suboptimal when the number of ears is minimized.

\begin{theorem}
For arbitrarily large $n$, there 
is a simple polygon with $n$ vertices that has maximum dual diameter
$\Omega(n)$ while any triangulation
that minimizes the number of 
ears has dual diameter $O(\sqrt{n})$. 
\end{theorem}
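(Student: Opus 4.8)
The plan is to mirror the construction behind Theorem~\ref{thm_leaves}, but with the two optimization criteria exchanged. There, a gadget was built in which \emph{forcing extra ears} inflates the dual diameter; here I would instead build a gadget in which \emph{lengthening the dual spine} is only possible at the cost of extra ears, so that ear\nobreakdash-minimization is pushed into a short, ``bushy'' triangulation while a long, path\nobreakdash-like triangulation (necessarily carrying more ears) remains available for the maximum. Concretely, I would assemble $\Poly$ from $c$ copies of a small gadget chained along a shared boundary, and balance the two parameters as $c=\Theta(\sqrt n)$ with gadget size $\Theta(\sqrt n)$, exactly in the spirit of the $c=k=\Theta(\sqrt n)$ choice in Theorem~\ref{thm_leaves}.

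The easy half is the lower bound on the maximum dual diameter. It suffices to exhibit a single triangulation whose dual is a long caterpillar: triangulate every gadget in its ``stretched'' state so that the dual paths of consecutive gadgets concatenate into one spine of length $\Omega(n)$, giving $\MaxDT$ diameter $\Omega(n)$ (and trivially at most $n-3$). This is a pure existence claim and should follow by inspection of the gadget, just as the $3c+2$\nobreakdash-ear triangulation in Theorem~\ref{thm_leaves} realizes its linear path.

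The real content is the upper bound $O(\sqrt n)$ on \emph{every} ear\nobreakdash-minimizing triangulation, and here the first point is a counting constraint: decomposing the edges of a degree\nobreakdash-$3$ tree on $V$ vertices into $\lceil L/2\rceil$ edge\nobreakdash-disjoint leaf\nobreakdash-to\nobreakdash-leaf paths shows its diameter is $\Omega(V/L)$, so a dual diameter of $O(\sqrt n)$ forces $\Omega(\sqrt n)$ leaves, i.e.\ $\Omega(\sqrt n)$ ears. Hence the gadget must be designed so that \emph{every} triangulation already has $\Theta(\sqrt n)$ ears, which I would guarantee by planting $\Theta(\sqrt n)$ spikes that see only their two neighbours and are therefore forced ears, and, crucially, so that the stretched state of a gadget carries \emph{strictly more} ears than its ``bypassed'' (short) state. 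Granting such a gadget, I would argue by exchange that any triangulation attaining the global minimum number of ears must use the bypassed state in every copy: replacing a stretched gadget by its bypass strictly lowers the ear count, contradicting minimality. Once every copy is bypassed, the dual tree is a bounded family of forced\nobreakdash-ear pendants hanging from a short connecting spine, so its diameter is dominated by at most two pendant lengths plus the spine, each $O(\sqrt n)$.

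The hardest and most delicate step is establishing the strict ear penalty of the stretched state, and therefore the ``for all ear\nobreakdash-minimizers'' claim, because a priori a longer spine wants \emph{fewer} leaves, hence fewer ears, not more; the entire construction hinges on engineering the gadget so that routing the dual path through it necessarily isolates additional forced ears along the way. I expect this to require a careful case analysis of the gadget's diagonals, analogous to but reversing the analysis behind Proposition~\ref{prop_leaves}, together with bookkeeping of how the ears at the shared gadget boundaries merge once the $c$ copies are glued, so that the per\nobreakdash-copy ear accounting aggregates correctly to a global minimum.
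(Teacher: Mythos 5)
Your overall strategy is the same as the paper's: chain $c=\Theta(\sqrt n)$ copies of a gadget of size $k=\Theta(\sqrt n)$, exhibit a ``stretched'' triangulation of all copies whose dual paths concatenate to length $\Omega(n)$, and show that ear-minimization forces every copy into a ``bypassed'' state whose concatenation has diameter $O(c+k)=O(\sqrt n)$. The genuine gap is that your proposal never actually produces the gadget, and the gadget \emph{is} the proof. Everything you argue concretely---the counting constraint, the exchange step, the final assembly---is routine once one has a polygon piece with the properties you postulate: (i) every triangulation of the piece has at least one ear, (ii) any triangulation of the piece with only one ear has its internal dual path shorter by $k$ than the stretched one, and (iii) the stretched triangulation carries at least two ears. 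The paper's proof consists essentially of exhibiting such a piece (\figurename~\ref{fig:fig_ears_max}) and verifying (i)--(iii) by a rigidity argument: to kill the second ear, one of the two ear vertices $v_p,v_q$ must be joined to a non-neighbour, the only candidates being the edges $v_pv_a$ and $v_qv_a$; either choice blocks every edge between the bottom vertex $v_b$ and the $k$ vertices between $v_q$ and $v_a$, which forces the short triangulation of that region and shrinks the dual path by $k$. Your substitute mechanisms do not deliver this. Spikes that see only their two neighbours are ears in \emph{every} triangulation of the piece, so they raise the ear count of the stretched and the bypassed states equally and cannot create the strict ear penalty that separates them; you acknowledge that a longer dual spine a priori wants \emph{fewer} ears, and then leave the construction that reverses this tendency as a promise (``I expect this to require a careful case analysis''). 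That case analysis, for a concrete polygon, is precisely the content of the theorem.

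Two smaller points. First, your exchange argument needs the copies to be independent: retriangulating one copy must not change the ears or triangulations available in the others. The paper secures this by separating consecutive copies with \emph{unavoidable} edges (edges present in every triangulation); this is a concrete mechanism, not bookkeeping, and without it per-copy ear minimization need not aggregate to the global minimum. Second, your counting lemma is misstated: the edges of a tree with maximum degree $3$ and $L$ leaves cannot in general be partitioned into $\lceil L/2\rceil$ edge-disjoint leaf-to-leaf paths, since a degree-$3$ vertex has odd degree and so some path of any edge partition must end there; the correct partition uses roughly $L-1$ paths, which still yields diameter $\Omega(V/L)$, so your conclusion that diameter $O(\sqrt n)$ forces $\Omega(\sqrt n)$ ears survives. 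But that lemma is only a consistency check on what the construction must look like; it does not help build the polygon.
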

\begin{proof}
\figurename~\ref{fig:fig_ears_max} shows a triangulation of a part 
of a simple polygon.
We suppose that the indicated dual path~$\pi$ is the only one of maximum 
length.
In addition to the ears at the endpoints of~$\pi$, there are two ears at 
the vertices $v_p$ and $v_q$.
If we want to have at most one ear in this part of the polygon,
at least one of $v_p$ and $v_q$ must be connected to a 
non-neighboring vertex by a triangulation edge.
For this, the only possibilities are $v_p v_a$ and $v_q v_a$.
But then there cannot be any edge between the bottommost vertex~$v_b$ 
and the $k$ vertices between $v_q$ and $v_a$.
In particular, that part must be triangulated as shown to the right 
of \figurename~\ref{fig:fig_ears_max}.
Here, there is only one ear, but the dual diameter is reduced 
by~$k$ (assuming the remainder of the polygon is large enough).
As in the proof of Theorem~\ref{thm_leaves}, we concatenate 
$\Theta(\sqrt{n})$ copies of this construction and choose 
$k = \Theta(\sqrt{n})$.
The parts are independent in the sense that they are separated by 
\emph{unavoidable} edges (i.e., edges that are present in any 
triangulation of the resulting polygon).%
\footnote{Unavoidable edges are defined by segments between two vertices s.t.\ no other edge crosses them.
Hence, they have to be present in every triangulation.
Unavoidable edges of point sets have been investigated by Karoly and Welzl~\cite{karolyi_welzl} (as ``crossing-free segments''), and Xu~\cite{xu} (as ``stable segments'').}
Hence, while the dual diameter of a \MaxDT{} is linear in~$n$, it is 
in~$O(\sqrt{n})$ for any triangulation that minimizes the number of ears.
\end{proof}

\begin{figure}
\centering
\includegraphics[scale=0.8]{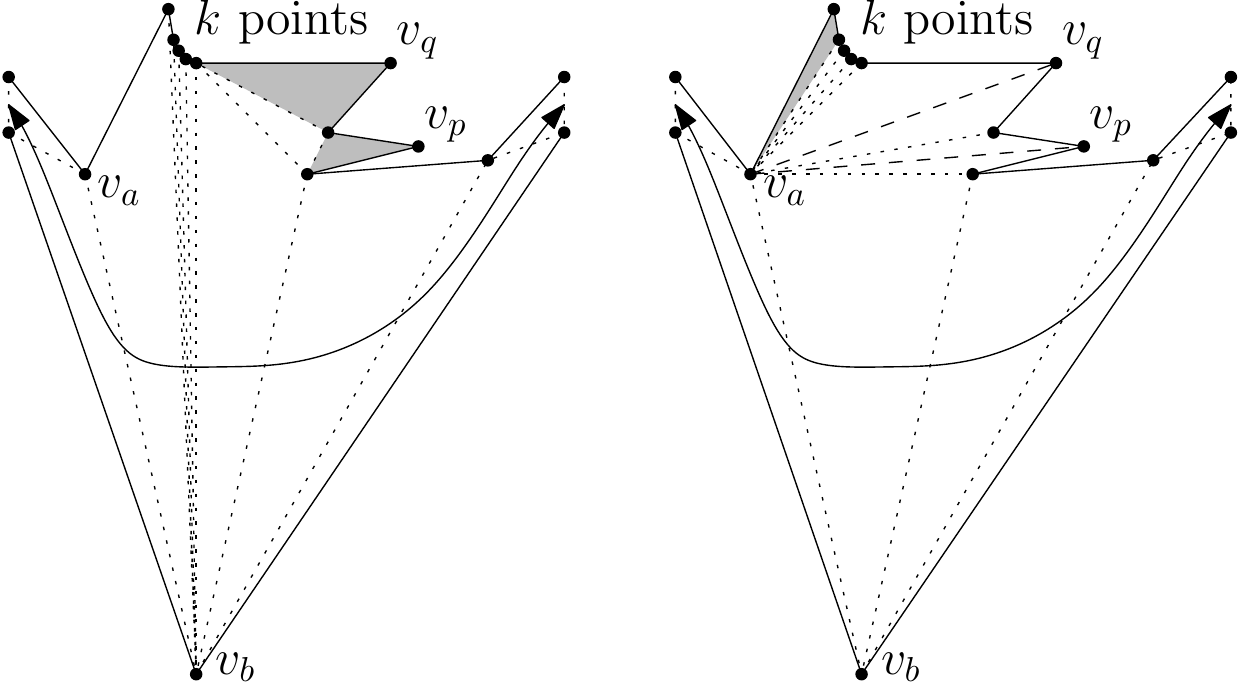}
\caption{Two triangulations of a part of a polygon where the dual diameter is locally decreased by~$k$ when minimizing the number of ears.}
\label{fig:fig_ears_max}
\end{figure}

It is easy to construct polygons for which the dual graph of any 
triangulation is a path, forcing minimum dual diameter $\Omega(n)$.
The other direction is slightly less obvious.

\begin{prop}
For any $n$, there exists a simple polygon~$\Poly$ with $n$ vertices such that the dual diameter of any \MaxDT{} of $\Poly$ is in~$\Theta(\log n)$.
\end{prop}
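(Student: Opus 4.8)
The lower bound is immediate: by Corollary~\ref{cor_lb}, every triangulation of an $n$-vertex simple polygon---in particular a \MaxDT---has dual diameter at least $2\lceil\log_2(n/3)\rceil-1=\Omega(\log n)$. The entire difficulty therefore lies in the upper bound: I must exhibit an $n$-vertex polygon $\Poly$ in which \emph{every} triangulation has dual diameter $O(\log n)$. The plan is to force the macroscopic shape of every triangulation to be a balanced binary tree by means of unavoidable edges (the stable segments just defined in the footnote), so that no triangulation has enough freedom to create a long dual path.

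\emph{Construction.} I would build $\Poly$ recursively as a ``thickened complete binary tree''. The basic building block is a junction gadget $J$: a constant-size sub-polygon with three distinguished boundary edges (a parent door and two child doors), carrying a few reflex vertices. The reflex vertices are placed so that the only diagonal of $\Poly$ that can close off a child door is the one separating the corresponding child subtree from the rest; equivalently, each door segment is \emph{stable}---no other candidate diagonal can cross it---and hence unavoidable. Starting from a root door I attach a junction, attach a junction to each of its two child doors, and iterate for $d$ levels, with single triangles (ears) as leaves. Closing the root door yields a simple polygon with $\Theta(2^{d})$ vertices, so $d=\Theta(\log n)$. Geometrically this is a fattened drawing of a complete binary tree, with thin corridors for the edges and small reflex pockets at the nodes, realizable as a simple polygon by drawing the tree in the plane and thickening it.

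\emph{Analysis.} By construction the unavoidable door edges partition $\Poly$, in \emph{every} triangulation, into $\Theta(n)$ pieces of constant size, whose adjacency across unavoidable edges is exactly a balanced complete binary tree $\mathcal{B}$ of depth $d$. Fix any triangulation $\triang$. Each piece is triangulated into $O(1)$ triangles and thus contributes a dual subtree of diameter $O(1)$; the full dual tree $\triang^*$ is obtained by linking these subtrees across the unavoidable edges according to $\mathcal{B}$. Any path in $\triang^*$ therefore visits a sequence of pieces whose indices trace a path in $\mathcal{B}$, spending $O(1)$ dual edges inside each visited piece. Since $\mathcal{B}$ has diameter $2d=\Theta(\log n)$, every path in $\triang^*$ has length $O(d)=O(\log n)$, so the dual diameter of $\triang$ is $O(\log n)$. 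Combined with the lower bound, the dual diameter of any \MaxDT\ of $\Poly$ is $\Theta(\log n)$.

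\emph{Main obstacle.} The delicate part is the forcing step: proving that the junction gadget really makes each door segment unavoidable regardless of how the rest of $\Poly$ is triangulated. This is a local visibility/crossing argument---one must verify that no vertex outside a child subtree can see past the reflex vertices of its junction, so that no diagonal crosses the door segment---and the recursion must guarantee this simultaneously at all $\Theta(n)$ junctions. A secondary, purely technical point is the explicit geometric realization, i.e.\ choosing coordinates so that the thickened binary tree is a non-self-intersecting simple polygon with all prescribed reflex angles; this follows the standard pattern of fattening a planar tree. Once forcing is established, the $O(\log n)$ bound is immediate from the balance of $\mathcal{B}$ together with the constant size of the pieces.
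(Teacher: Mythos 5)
Your proposal is correct and takes essentially the same approach as the paper: the paper likewise forces unavoidable edges that partition the polygon into constant-size pieces (hexagons and quadrilaterals, obtained by recursively replacing the corners of a triangle with four-vertex gadgets) whose adjacency tree is balanced, yielding the $O(\log n)$ upper bound, with the $\Omega(\log n)$ lower bound coming from the bounded-degree argument. Your thickened-binary-tree gadget is just a different geometric realization of the same idea, and the unavoidability "forcing" step you flag as the main obstacle is treated in the paper at a comparable level of rigor (a visibility argument supported by a figure).
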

\begin{proof}
We incrementally construct $\Poly$ by starting with an arbitrary 
triangle~$t$.
See \figurename~\ref{fig:fig_max_log} for an accompanying illustration.
We replace every corner of~$t$ by four new vertices so that two of them 
 can see only these four new vertices.
This means that the edge between the other two newly added vertices is 
unavoidable.
We repeat this construction recursively in a balanced way.
If necessary, we add dummy vertices to 
obtain exactly~$n$ vertices.
The unavoidable edges partition $\Poly$ into convex regions, either 
hexagons or quadrilaterals. The dual tree of this partition is balanced 
with diameter $\Theta(\log n)$. 
Since every triangulation of $\Poly$ contains all unavoidable edges,
the maximum possible dual diameter is $O(\log n)$.
\end{proof}

\begin{figure}
\centering
\includegraphics[scale=0.8]{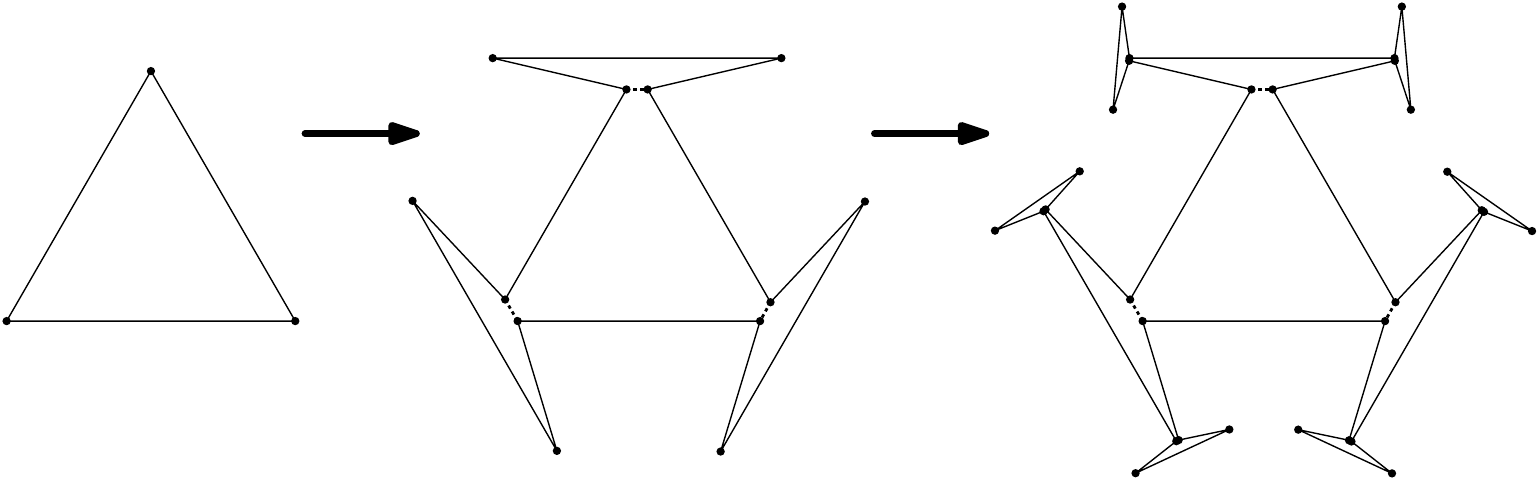}
\caption{The convex vertices of a polygon are incrementally replaced by four new vertices, resulting in unavoidable edges (dotted).}
\label{fig:fig_max_log}
\end{figure}

\section{Optimally Triangulating a Simple Polygon}\label{sec_poly}
We now consider the algorithmic question of constructing a $\MinDT$ 
and a $\MaxDT$ of a simple polygon $\Poly$ with $n$ vertices. 
Let $v_1,\ldots, v_n$ be the vertices of~$\Poly$ in counterclockwise order.
The segment $v_iv_j$ is a \emph{diagonal} of $\Poly$ if it lies 
completely in~$\Poly$ but is not part of
the boundary of~$\Poly$.
For a diagonal $v_iv_j$, $i < j$, we define $\Poly_{i,j}$ as the 
polygon with vertices $v_i, v_{i+1},\dots, v_{j-1}, v_j$; 
see \figurename~\ref{fig_pij}.
Observe that $\Poly_{i,j}$ is a simple polygon contained in~$\Poly$.
If $v_iv_j$ is not a diagonal, we set $\Poly_{i,j}=\emptyset$.

\begin{figure}[tb]
\centering
\includegraphics[width=.5\columnwidth]{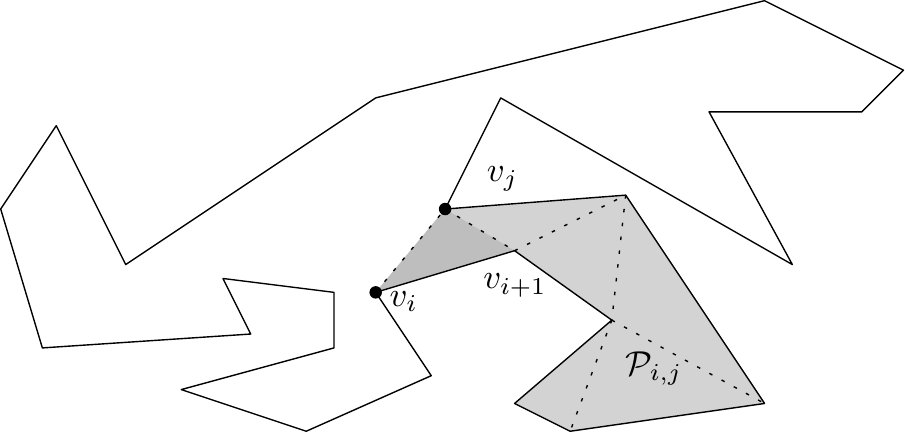}
\caption{Any triangulation of $\Poly_{i,j}$ 
(gray) has exactly one triangle adjacent to 
$v_iv_j$ (dark gray).}
\label{fig_pij}
\end{figure}

\begin{theorem}\label{thm_min_alg}
For any simple polygon $\Poly$ with $n$ vertices, we can compute a
$\MinDT$ in $O(n^3\log n)$ time.
\end{theorem}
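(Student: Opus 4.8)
The plan is to reduce the optimization to a decision problem and binary-search over the target value. Concretely, I would solve \emph{``does $\Poly$ admit a triangulation of dual diameter at most $k$?''} by dynamic programming over the sub-polygons $\Poly_{i,j}$, and then locate the smallest feasible $k$ by binary search. Since every triangulation of $\Poly$ has exactly $n-2$ triangles, its dual tree $\triang^*$ has $n-2$ nodes and hence dual diameter in $\{0,\dots,n-3\}$; feasibility is clearly monotone in $k$, so $O(\log n)$ decision rounds suffice. If each round runs in $O(n^3)$ time, the total is $O(n^3\log n)$.

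The dual tree of $\Poly_{i,j}$ is naturally rooted at the unique triangle incident to the diagonal $v_iv_j$ (cf.\ \figurename~\ref{fig_pij}). Choosing the apex $v_\ell$ of that triangle, with $i<\ell<j$, splits $\Poly_{i,j}$ into $\Poly_{i,\ell}$ and $\Poly_{\ell,j}$, whose rooted dual trees hang off the root through the edges $v_iv_\ell$ and $v_\ell v_j$. The difficulty is that the diameter is a \emph{global} quantity: minimizing it inside a sub-polygon may conflict with keeping the root shallow, so storing a single ``best'' sub-triangulation per sub-polygon is not enough, and tracking the full height/diameter trade-off would be too expensive. The key observation is that once $k$ is fixed this conflict disappears. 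For fixed $k$, I define $h_{i,j}$ to be the minimum possible height (the largest distance from the root triangle to any triangle of the subtree) over all triangulations of $\Poly_{i,j}$ whose dual subtree has diameter at most $k$, with $h_{i,j}=\infty$ if no such triangulation exists.

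The table is filled bottom-up in order of increasing $j-i$. A single triangle gives $h_{i,i+2}=0$. For the general step I try every apex $v_\ell$ and combine the child values: the new root height is $\max(1+h_{i,\ell},\,1+h_{\ell,j})$, the longest path through the root has length $(1+h_{i,\ell})+(1+h_{\ell,j})$, and each child subtree already has diameter at most $k$ whenever $h_{i,\ell}$ and $h_{\ell,j}$ are finite. Hence the whole subtree has diameter at most $k$ exactly when this through-root path is at most $k$, and among all apexes satisfying this constraint I keep the one minimizing the new height; empty children (when $\ell=i+1$ or $\ell=j-1$) are handled by dropping the corresponding term. Tracking only the minimum height is justified by a short exchange argument: a smaller child height can only relax the through-root constraint at the parent and lower the height passed further up, while the child's diameter bound is a pure threshold, so minimizing each child's height independently is optimal. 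The full polygon is treated by one extra combination step rooted at a boundary edge, say $v_nv_1$, which is feasible for the current $k$ iff some apex $v_\ell$ yields two feasible children with through-root length at most $k$.

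Each cell $h_{i,j}$ is computed in $O(n)$ time by scanning its $O(n)$ apexes with $O(1)$ work each (validity of diagonals and apexes is precomputed within the $O(n^3)$ budget), so one decision round costs $O(n^3)$ and the binary search gives $O(n^3\log n)$ in total; recording a minimizing apex in each cell lets us reconstruct an explicit \MinDT{} by a standard traceback. The main obstacle is exactly the global nature of the dual diameter, and the crucial step is realizing that fixing the target $k$ collapses the two-dimensional (height, diameter) state at every sub-polygon into the single scalar $h_{i,j}$; carefully verifying the exchange argument that legitimizes this collapse is where the real care is needed.
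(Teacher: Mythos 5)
Your proposal is correct and matches the paper's proof essentially step for step: the same binary search on the diameter threshold, and the same dynamic program over the sub-polygons $\Poly_{i,j}$ whose state (your $h_{i,j}$) is, up to a $+1$ offset, exactly the paper's table $M_d[i,j]$, with the identical through-root feasibility test and $O(n^3)$ per decision round. The only difference is presentational — you spell out the exchange argument that justifies collapsing the state to a single height value, which the paper leaves implicit.
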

\begin{proof}
We use the classic dynamic programming approach~\cite{klincsek}, with 
an additional twist to account for the non-local nature of the 
objective function.
Let $v_iv_j$ be a diagonal.
Any triangulation $\triang$ of $\Poly_{i,j}$ has exactly one 
triangle $t$ incident to $v_iv_j$; see \figurename~\ref{fig_pij}. 
Let $f(\triang)$ be the maximum length of a path in 
$\triang^*$ that has $t$ as an endpoint.

For $d>0$ and $i,j = 1, \dots, n$, with $i < j$, let 
$\mathcal{T}_d(i,j)$ be the set of all triangulations of $\Poly_{i,j}$ 
with dual diameter at most $d$ (we set 
$\mathcal{T}_d(i,j)=\emptyset$ if $v_iv_j$ is not a diagonal of $\Poly$).
We define $M_d[i,j]=\min_{\triang\in\mathcal{T}_d(i,j)}f(\triang)+1$, 
if $\mathcal{T}_d(i,j)\neq \emptyset$, or $M_d[i,j] = \infty$, otherwise.
Intuitively, we aim for a triangulation that minimizes the distance 
from $v_iv_j$ to all other triangles of $\Poly_{i,j}$ while keeping 
the dual diameter below $d$ (the value of $M_d[i,j]$ is the smallest 
possible distance that can be obtained). Let $\mathcal{V}(i,j)$ be 
all vertices $v_l$ of $\Poly_{i,j}$ such that the 
triangle $v_iv_jv_l$ lies inside $\Poly_{i,j}$.
We claim that $M_d[i,j]$ obeys the following recursion:

\[ M_d[i, j] = \left\{
\begin{array}{l}
  0, \hfill \mbox{if $i+1 = j$,}\\
  \infty, \hfill \mbox{if $v_iv_j$ is not a diagonal,}\\
  \min_{v_l\in\mathcal{V}(i,j)} I_d[i,j,l], \hfill \quad \mbox{otherwise,}
\end{array}
\right. 
\]
where
\[ I_d[i,j,l] = \left\{
\begin{array}{l}
  \infty,  \hfill \mbox{if $M_d[i,l] + M_d[l,j] > d$,} \\
  \max\{M_d[i,l], M_d[l,j]\}+1, \quad \mbox{otherwise.}
\end{array} \right.
\]

We minimize over all possible triangles $t$ in $\Poly_{i,j}$ incident to $v_iv_j$.
For each $t$, the longest path to $v_iv_j$ is the longer of the paths to the other edges of $t$ plus $t$ itself.
If $t$ joins two longest paths of total length more than $d$, there is no valid solution with $t$.
Thus, we can decide in $O(n^3)$ time whether there is a triangulation with dual diameter at most $d$, i.e., if $M_d[1,n] \neq \infty$.
Since the dual diameter is at most $n-3$, a binary search gives an $O(n^3 \log n)$ time algorithm.
\end{proof}

We can use a very similar approach to obtain some \MaxDT. 

\begin{theorem}
For any simple polygon $\Poly$ with $n$ vertices, we can compute a
$\MaxDT$ in $O(n^3\log n)$ time.
\end{theorem}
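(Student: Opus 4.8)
The plan is to reuse the dynamic-programming skeleton of Theorem~\ref{thm_min_alg} essentially verbatim: the same subproblems $\Poly_{i,j}$, and the same decomposition in which every triangulation of $\Poly_{i,j}$ has a unique triangle $t=v_iv_jv_l$ incident to the base diagonal $v_iv_j$, with $v_l\in\mathcal{V}(i,j)$, splitting the problem into $\Poly_{i,l}$ and $\Poly_{l,j}$. The only change is the objective. Since we now \emph{maximize} the dual diameter, I would maintain for each subproblem two quantities over all triangulations $\triang$ of $\Poly_{i,j}$: the maximum \emph{depth} $D[i,j]=\max_\triang f(\triang)+1$ (the longest dual path with the base triangle as an endpoint, using the $f$ and the ``$+1$'' convention of Theorem~\ref{thm_min_alg}), and the maximum internal dual diameter $\Delta[i,j]=\max_\triang \operatorname{diam}(\triang^*)$.

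Next I would set up the two recursions. For the depth, placing $t$ adds one level above whichever child descends deeper, so
\[
D[i,j]=\max_{v_l\in\mathcal{V}(i,j)}\bigl(\max\{D[i,l],D[l,j]\}+1\bigr),
\]
with $D[i,j]=0$ when $i+1=j$. For the diameter, the key observation is that any dual path has a unique topmost triangle; if that triangle is $t$, the path descends into both children and has length $D[i,l]+D[l,j]$, and otherwise the whole path lies inside a single child. Hence
\[
\Delta[i,j]=\max_{v_l\in\mathcal{V}(i,j)}\max\bigl\{\Delta[i,l],\,\Delta[l,j],\,D[i,l]+D[l,j]\bigr\},
\]
with the obvious base cases. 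The answer is $\Delta[1,n]$, and an actual \MaxDT{} is recovered by storing the maximizing choice of $v_l$ and, recursively, whether the realized diameter is internal to a child or bridges through $t$.

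The step I expect to be the crux is justifying that this local combination is correct for a \emph{global} quantity, i.e.\ that the depth-optimal and diameter-optimal triangulations of the two children may be chosen \emph{independently}. This is precisely where \MaxDT{} is easier than \MinDT{}: the three candidate terms above each depend on the triangulation of at most one child, and the bridging term $D[i,l]+D[l,j]$ separates into a sum, so maximization distributes over the two disjoint sub-polygons and the overall maximum decouples. There is no global ``diameter at most $d$'' constraint tying the two sides together, so—unlike in Theorem~\ref{thm_min_alg}—no $d$-indexed table and no binary search are logically required.

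Finally, for the running time, there are $O(n^2)$ subproblems, each evaluated by an $O(n)$ maximization over $v_l\in\mathcal{V}(i,j)$, for $O(n^3)$ total, comfortably within the claimed bound; if one prefers to keep the procedure structurally identical to the \MinDT{} algorithm by binary-searching the target diameter, the stated $O(n^3\log n)$ bound follows immediately.
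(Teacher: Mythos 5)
Your proof is correct, but it takes a genuinely different route from the paper's. The paper recycles its \MinDT{} machinery almost verbatim: for a fixed target $d$ it fills a three-valued table $M_d[i,j]$ (equal to $-\infty$ when $\Poly_{i,j}$ has no triangulation, to $\infty$ when some triangulation of $\Poly_{i,j}$ already achieves dual diameter at least $d$, and otherwise to the maximum depth $\max_T f(T)+1$), and then binary-searches over $d$, giving $O(n^3\log n)$. You instead compute the optimum directly with two unparametrized tables, maximum depth $D[i,j]$ and maximum internal diameter $\Delta[i,j]$, joined by $\Delta[i,j]=\max_{l}\max\{\Delta[i,l],\Delta[l,j],D[i,l]+D[l,j]\}$. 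Your decoupling argument is exactly the right justification, and it pinpoints where maximization differs from minimization: $\max_{T_1,T_2}\max\{A(T_1),B(T_2),C(T_1)+E(T_2)\}$ splits into $\max\{\max A,\max B,\max C+\max E\}$ because each term touches at most one child and the bridging term is a sum over disjoint sub-polygons, whereas the analogous identity under $\min$ fails (a depth-minimizing triangulation of a child need not be diameter-minimizing), which is precisely why the paper's \MinDT{} algorithm --- and, by structural inertia, its \MaxDT{} algorithm --- carries the parameter $d$. What your approach buys is the removal of the binary search, hence $O(n^3)$ time, strictly better than the stated bound; what the paper's buys is a single template handling both optimization directions with only sign changes. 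Two small points you gesture at with ``obvious base cases'' that deserve one line each: for a degenerate child ($l=i+1$ or $l=j-1$) set $D=0$ and $\Delta=0$, which is harmless since the parent polygon always contributes at least one triangle; and paths that merely \emph{end} at the apex triangle are already dominated by the bridging term since $\max\{D[i,l],D[l,j]\}\leq D[i,l]+D[l,j]$ as $D\geq 0$.
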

\begin{proof}
The proof is similar to the one of Theorem~\ref{thm_min_alg}.
This time, we are looking for triangulations that have dual diameter at 
least~$d$.
Let $f(T)$ be defined as before, and let $\mathcal{T}(i,j)$ be the set of 
all triangulations of $\Poly_{i,j}$ (this time, we do not need the 
third parameter).
We define $M_d[i,j]$ in the following way.
If $\mathcal{T}(i,j) = \emptyset$, then $M_d[i,j] = -\infty$.
If $\mathcal{T}(i,j)$ contains a triangulation with diameter at 
least~$d$, $M_d[i,j] = \infty$.
Otherwise, let $M_d[i,j] = \max_{T \in \mathcal{T}(i,j)} f(T) + 1$.
Clearly, there is a triangulation with diameter at 
least~$d$ if and only if $M_d[1,n] = \infty$.
With $\mathcal{V}(i,j)$ defined as before, the recursion for 
$M_d[i,j]$ is

\[ M_d[i, j] = \left\{
\begin{array}{l}
  0, \hfill \mbox{if $i+1 = j$,}\\
  -\infty, \hfill \mbox{if $v_iv_j$ is not a diagonal,}\\
  \max_{v_l\in\mathcal{V}(i,j)} I_d[i,j,l], \hfill \quad \mbox{otherwise,}
\end{array}
\right.
\]
where
\[ I_d[i,j,l] = \left\{
\begin{array}{l}
  -\infty, \hfill \mbox{if $M_d[i,l]$ or $M_d[l,j]$ is $-\infty$,} \\
  \infty,  \quad \hfill \mbox{if $M_d[i,l] + M_d[l,j] \geq d$,} \\
  \max\{M_d[i,l], M_d[l,j]\}+1, \quad \hfill \mbox{otherwise.}
\end{array} \right.
\]

For the given diagonal $v_i v_j$, we maximize over all possible triangles.
If at some point the triangle $t$ at $v_i v_j$ closes a path of length at 
least~$d$, we are basically done, as any triangulation of the remainder 
of the polygon results in a triangulation with dual diameter at least~$d$.
If the triangulation of $\Poly_{i,j}$ does not contain such a long path, 
we store the longer one to $v_i v_j$, as before.
Again, we can find the optimal dual diameter via a binary search, 
giving an $O(n^3 \log n)$ time algorithm.
\end{proof}

\section{Bounds for Point Sets}\label{sec_points}
We are now given a set $S$ of $n$ points
in the plane in general position, and we need to find a triangulation
of $S$ whose dual graph optimizes the diameter.
Since the dual graph has maximum degree $3$, it is easy to see that 
the $\Omega(\log n)$ lower bound for simple polygons extends 
for this case.
It turns out that this bound can always be achieved, as we show in 
Section~\ref{sec_min_point_set}.
In Section~\ref{sec_max_point_set}, we find a triangulation of~$S$ that 
has dual diameter in~$\Omega(\sqrt{n})$.

\subsection{Minimizing the Dual Diameter}\label{sec_min_point_set}

\begin{theorem}\label{theo_pointset}
Given a set $S$ of $n$ points in the plane in general position, we can
compute in $O(n \log n)$ time a triangulation
of $S$ with dual diameter $\Theta(\log n)$.
\end{theorem}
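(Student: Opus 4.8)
The lower bound is immediate from what is already in the excerpt: the dual graph has maximum degree $3$ and $\Theta(n)$ vertices, so by the Moore bound (exactly as in Corollary~\ref{cor_lb}) its diameter is $\Omega(\log n)$. The task is therefore to \emph{construct}, in $O(n\log n)$ time, a triangulation of $S$ whose dual diameter is $O(\log n)$.

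The plan is a balanced divide-and-conquer. I would keep subproblems of the form ``a convex region $R$ whose corners are points of $S$, together with the points of $S$ lying inside $R$,'' the top-level instance being $R=\operatorname{conv}(S)$. At each step I want to split the current subproblem, using a separator \emph{built from triangulation edges between points of $S$}, into a constant number of subproblems each carrying at most a constant fraction (say $2/3$) of the points, and recurse; the base case (a convex region with no interior point) is triangulated by the balanced convex construction of Proposition~\ref{prop_convex}, which already has dual diameter $O(\log|R|)$. To find a balanced separator I would use a centerpoint $c$ of the inner points, i.e.\ a point such that every closed halfplane through $c$ carries at least a third of them. Since any triangle with apex $c$ spans an angle less than $\pi$, it is contained in a halfplane bounded by a line through $c$, and such a halfplane carries at most $2/3$ of the points; thus the three triangles spanned by $c$ and the boundary of $R$ each cut off at most a $2/3$ fraction. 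The intended analysis is a recurrence $\Phi(N)\le \Phi(\tfrac23 N)+O(1)$ for a radius-type potential $\Phi$ (the maximum dual distance from one fixed central triangle to all others), which solves to $O(\log N)$ and bounds the dual diameter by $2\Phi(n)=O(\log n)$. With $O(\log n)$ levels and $O(n)$ work per level (an approximate centerpoint and the redistribution of points are computable in linear time), the running time is $O(n\log n)$.

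The hard part is reconciling two tensions in the splitting step. First, a centerpoint is in general \emph{not} a point of $S$, whereas a triangulation of $S$ admits no Steiner vertices; so I must replace the idealized centerpoint split by a balanced separator realizable with data points, either a point of $S$ acting as an apex or a balanced chord $v_iv_j$, and argue by a ham-sandwich/continuity argument restricted to the points and edges of $S$ that a constant-fraction balanced such separator always exists, including the degenerate convex-position case where no interior apex is available and one must fall back on a balanced chord as in Proposition~\ref{prop_convex}. Second, and more delicately, the recurrence $\Phi(N)\le\Phi(\tfrac23 N)+O(1)$ holds only if the dual \emph{crossing cost} between adjacent subproblems is $O(1)$ per level: a careless split, such as bisecting by a line and triangulating the leftover channel between the two convex hulls, makes crossing a separator cost $\Theta(\log N)$ and yields only the weaker bound $O(\log^2 n)$. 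Avoiding this forces the separator to have bounded dual complexity and forces control on the \emph{degrees} of the separator vertices, since routing around a high-degree apex in the dual graph is itself expensive. Establishing that at every level there is a balanced, data-point separator of bounded crossing cost, so that the clean recurrence $\Phi(N)\le\Phi(\tfrac23 N)+O(1)$ really holds, is the crux of the proof.
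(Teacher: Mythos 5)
Your lower bound is fine (and matches the paper), but your upper-bound construction is not a proof: the two difficulties you yourself flag at the end---the existence of a balanced separator realizable with triangulation edges between data points, and the validity of the recurrence $\Phi(N)\le\Phi(\tfrac23 N)+O(1)$---are exactly the missing content, and you leave both unestablished. Moreover, the second gap is worse than you suggest. Even if you are granted a balanced data-point separator with $O(1)$ crossing cost at every level, the clean recurrence still does not follow. The potential your induction actually needs is $\Phi(R)=\max_{t,e}\,\mathrm{dist}(t,\mathrm{tri}(e))$, taken over all triangles $t$ in $R$ and all boundary edges $e$ of $R$ (where $\mathrm{tri}(e)$ is the triangle of $R$ incident to $e$), because a dual path entering a subregion through one boundary edge must in general exit it through another \emph{specific} one. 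With that potential, a path from $t\in R_1$ to a boundary edge of $R$ lying on $R_2$ only gives $\Phi(R)\le\Phi(R_1)+\Phi(R_2)+O(1)$, and this recurrence (with $|R_1|+|R_2|\le N$, both at most $2N/3$) solves to $\Theta(N)$, i.e., nothing better than the trivial bound. The convex construction of Proposition~\ref{prop_convex} escapes this trap because every split is \emph{anchored at the distinguished boundary edge}: the splitting triangle is incident to that edge, so each subregion needs only one distinguished edge and the costs never compound. An interior centerpoint apex destroys precisely this anchoring, so your plan needs a genuinely new idea, not merely the ``ham-sandwich/continuity'' elaboration you defer.

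For contrast, the paper's proof avoids separators and recurrences on the point set altogether. It takes the balanced triangulation $\triang'$ of a convex $n$-gon from Proposition~\ref{prop_convex} (dual diameter $\Theta(\log n)$), observes that $\triang'$ is outerplanar, and invokes the known theorem that any outerplanar graph on $n$ vertices has a plane straight-line embedding on \emph{any} $n$-point set in general position, computable in $O(nd)$ time where $d$ is the dual diameter. The embedded copy $\triang_S$ already realizes the $O(\log n)$ dual structure on $S$; what remains inside the convex hull are untriangulated \emph{pockets}, which are simple polygons and are triangulated arbitrarily. The key lemma is that any triangle in a pocket is within $O(\log n)$ dual steps of a triangle of $\triang_S$: a pocket triangle having no edge on $\triang_S$ (and not the unique one on the hull) has degree $3$ in the pocket's dual tree, so the BFS layers from any such triangle grow geometrically until they hit the skeleton. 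Combining this with the $O(\log n)$ diameter of $\triang_S^*$ gives the $\Theta(\log n)$ bound, and the embedding plus arbitrary pocket triangulations run in $O(n\log n)$ total time. This reduction-to-the-convex-case-via-embedding is the idea your proposal is missing.
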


\begin{proof}
Let $\Poly$ be a convex polygon with $n$ vertices and
$\triang'$ a triangulation of $\Poly$ with dual diameter
$\Theta(\log n)$ (e.g., the triangulation from
Proposition~\ref{prop_convex}).
The triangulation $\triang'$ is an outerplanar graph.
Any outerplanar graph of $n$ vertices has a plane
straight-line embedding on any given $n$-point set~\cite{pgmp-eptvsp-91}.
Furthermore, such an embedding can be found in $O(nd)$
time and $O(n)$ space, where $d$ is the dual diameter of the
graph~\cite{b-oeopgps-02}.

Let $\triang_S$ be the embedding of $\triang'$ on~$S$.
In general, $\triang_S$ does not triangulate $S$; see \figurename~\ref{fig_pointset}.
\begin{figure}[tb]
\centering
\includegraphics[width=0.7\columnwidth]{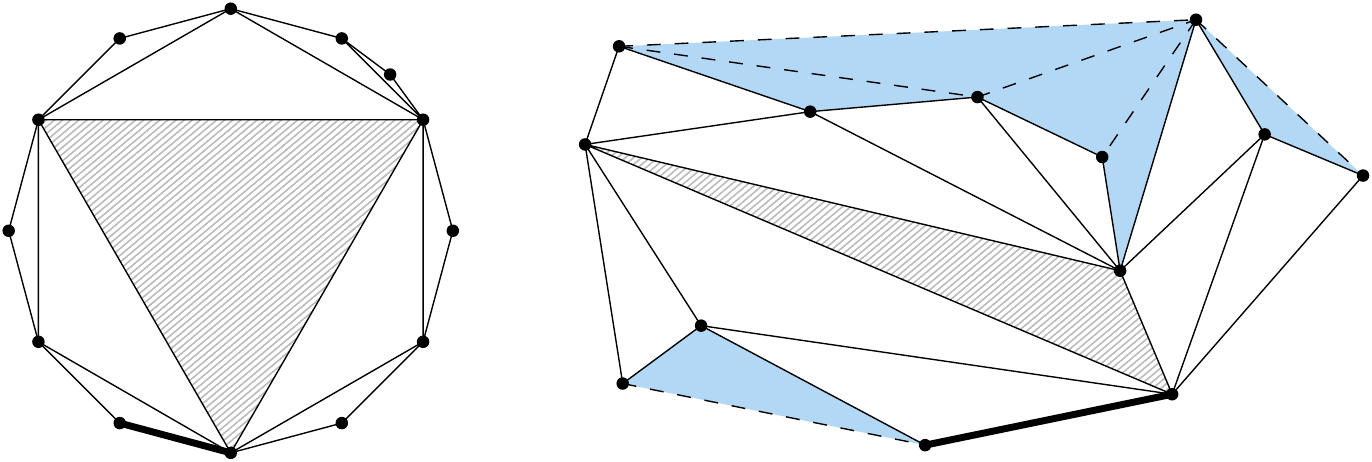}
\caption{When computing a $\MinDT$ of a point set $S$, we first view it as if it were in convex position and construct a $\MinDT$ (left image).
Then, we embed $T_S$ into the actual point set (solid edges in the right image).
(The correspondence is marked by the central triangle and the thick boundary edge.)
All remaining untriangulated pockets (highlighted region in the figure) are triangulated arbitrarily (dashed edges). }
\label{fig_pointset}
\end{figure}
Consider the convex hull of $\triang_S$ (which equals the convex hull of~$S$).
The untriangulated \emph{pockets} are simple polygons.
We triangulate each pocket arbitrarily to obtain
a triangulation $\triang$ of $S$.
%
We claim that the dual diameter of $\triang$ is $O(\log n)$.
%
\begin{lemma}\label{lem_logpoly}
The dual distance from any
triangle in a pocket to any triangle in $\triang_S$ 
is $O(\log n)$.
\end{lemma}
\begin{proof}
Let $Q$ be a pocket, and $\triang_Q$
a triangulation of $Q$.
Since $Q$ is a simple polygon, the dual $\triang_Q^*$ is a 
tree with maximum degree $3$.
A triangle $t$ of $\triang_Q$ 
not incident to the boundary of $\triang_S$ either has
degree $3$ in $\triang_Q^*$, or it is the unique triangle in $\triang_Q$ 
that shares an edge with the convex hull of $S$.
We perform a breadth-first-search in $\triang_Q^*$ starting from $t$,
and let $k$ be the maximum number of consecutive layers from the root
of the BFS-tree that
do not contain a triangle incident to the boundary. 
By the above observation, all vertices in the first $k-1$ levels have degree three in $\triang_Q^*$. Thus, each vertex of level $k-1$ or lower has two children. In particular, at each level the number of vertices must double (except at the topmost level where the number of vertices is tripled), hence $k = O(\log n)$. 
\end{proof}

Given Lemma~\ref{lem_logpoly} and the fact that
$\triang_S$ has dual diameter $O(\log n)$, 
Theorem~\ref{theo_pointset} is now immediate.
\end{proof}

\subsection{Obtaining a Large Dual Diameter}\label{sec_max_point_set}

We now focus our attention on the problem of triangulating $S$ so that the 
dual diameter is maximized. 

\begin{theorem}\label{theo_pointslargeDiam}
Given a set $S$ of $n \geq 3$ points in the plane in general position, we can
compute in $O(n \log n)$ time a triangulation
of $S$ with dual diameter at least $\sqrt {n-3}$.
\end{theorem}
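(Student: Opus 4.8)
The plan is to build a triangulation whose convex-layer edges act as \emph{barriers}, and to lower-bound the dual diameter by a forced shortest path that must cross these barriers. First I would compute the convex layers $L_1 \supset L_2 \supset \dots \supset L_k$ of $S$ (nested convex polygons obtained by onion peeling) in $O(n\log n)$ time, with $s_i = |L_i|$ and $\sum_i s_i = n$. I would then construct any triangulation $\triang$ of $S$ that contains \emph{all} edges of every layer $L_i$; each layer is then a closed curve of triangulation edges, so the $k$ layers partition the interior into $k$ regions (the $k-1$ annuli plus the innermost convex region). The key observation is that any dual path from a triangle in the outermost region to a triangle in the innermost region must cross each of the $k-1$ layer boundaries, and in every region it uses at least one triangle; hence the \emph{shortest} dual distance between such a pair is at least $k-1$. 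This is a genuine lower bound on a diametral distance, not merely on some long walk.

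Next I would separate into two regimes by a pigeonhole argument: since $k\cdot \max_i s_i \ge \sum_i s_i = n \ge n-3$, we have $\max\bigl(k,\max_i s_i\bigr)\ge \sqrt{n-3}$. In the \emph{many-layers} regime, where $k \ge \sqrt{n-3}$, the radial barrier-crossing bound above already gives dual diameter at least $k-1$, and I would push this to $\sqrt{n-3}$ by routing the two endpoint triangles so that the crossing count is exactly the number of layers (and by treating the degenerate innermost layer, a point or a segment, separately). In the \emph{fat-layer} regime, where some $L_j$ has $s_j \ge \sqrt{n-3}$, I would exploit that $L_j$ is a large convex polygon: the annulus adjacent to it is triangulated so that its dual forms a long cyclic or fan structure, and I would choose a diametral pair of triangles whose only connecting routes run the ``long way'' around $L_j$. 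The extreme case $k=1$ (all points in convex position) is clean, since then $\triang^*$ is a tree and can be made a path of length $n-2 \gg \sqrt{n-3}$.

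The hard part will be the fat-layer regime, and specifically \emph{shortcut avoidance}. As soon as there are interior points ($k\ge 2$), the relevant region is bounded, so its dual contains cycles and the opposite arc of a large annulus offers a ``short way around'' of length about $s_j/2$; similarly, the region interior to $L_j$ can provide a shortcut across its center. A naive max of the two regime bounds only yields roughly $\sqrt{n/2}$, which falls short of the claimed $\sqrt{n-3}$ by a constant factor, so the delicate point is to pick the diametral pair in a sufficiently balanced position that \emph{every} route between them---clockwise, counterclockwise, and through the enclosed region---is forced to be long simultaneously. I expect the resolution to combine the barrier-crossing count with the around-the-layer distance for one carefully chosen pair (rather than taking a maximum of two independent bounds), together with a precise accounting of the degenerate innermost layers, to obtain the exact threshold $\sqrt{n-3}$; triangulating each region takes only linear time, so the overall running time stays $O(n\log n)$.
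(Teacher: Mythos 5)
Your approach (convex layers plus a pigeonhole between a ``many layers'' regime and a ``fat layer'' regime) is genuinely different from the paper's, and its first half is sound: since every triangle lies inside one region of the layer partition, the nested layer boundaries are separators, and a dual path from the outer annulus to the innermost region indeed has length at least $k-1$. The gap is exactly where you flag it: the fat-layer regime is never established, and it cannot be closed as sketched. ``Triangulate the annulus so that its dual forms a long cyclic structure and pick a diametral pair that must go the long way around'' is not a lower bound on a \emph{shortest} path: once interior points exist the dual graph has cycles, and both the opposite arc of the annulus and a route through the region enclosed by $L_j$ are potential shortcuts. The strongest shortcut-proof argument available here (labeling each triangle by the index of its nearest incident vertex of $L_j$, in the spirit of Proposition~\ref{prp:convex_subset}) caps out at roughly $s_j/2$, because the indices live on a cycle and two triangles can be at cyclic index distance at most $s_j/2$. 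Balancing $k-1$ against $s_j/2$ under $k\cdot\max_i s_i\ge n$ gives only about $\sqrt{n/2}$, as you yourself compute. The ``carefully chosen balanced pair'' that would combine the radial crossing count with the around-the-layer distance is precisely the missing ingredient: no candidate pair is exhibited, and it is unclear any exists, since a radially separated pair forces crossings but no circumnavigation, while an antipodal pair forces circumnavigation but admits a shortcut through the middle. A proof that ends with ``I expect the resolution to combine\dots'' at the main difficulty is not yet a proof, and the exact threshold $\sqrt{n-3}$ is strictly beyond what the two regime bounds can deliver even in combination by a maximum.

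For contrast, the paper reaches the exact constant by avoiding this tension altogether: it fans all hull vertices from a single hull vertex $v_1$, and inside each fan triangle $\Delta_i=v_1v_iv_{i+1}$ containing $n_i$ points it uses the Erd\H{o}s--Szekeres theorem to extract a chain $\sigma_i$ of $\sqrt{n_i}$ points, each joined to $v_1$ and to $v_i$ (or to $v_{i+1}$). The point is that every fan chord $v_1v_i$, and for every chain point $w$ the pair of edges $v_1w$ and $wv_i$ (resp.\ $wv_{i+1}$), \emph{separates} the triangle at $v_1v_2$ from the triangle at $v_1v_h$; hence every one of these barriers must be crossed by any single path between that fixed pair, the crossings are distinct, and the contributions add up to $h-3+\sum_i\sqrt{n_i}\ge h-3+\sqrt{n-h}\ge\sqrt{n-3}$. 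The structural lesson is that to get an additive bound for one fixed pair, every barrier must be a separator for that pair; your layer edges are separators only in the radial direction, which is why the fat-layer regime resists. If you want to salvage your route, aim for the weaker statement $\Omega(\sqrt{n})$ (which your pigeonhole plus a Proposition~\ref{prp:convex_subset}-style labeling does give), but the theorem as stated needs the paper's anchored-fan construction or something equally additive.
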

\begin{proof}
Naturally, the triangulation $\triang$ must contain the edges of the convex hull of $S$.
Let $v_1,v_2,\ldots,v_h$ be the vertices of the convex hull of $S$ 
in clockwise order.
We connect $v_1$ to the vertices $v_3,v_4,\ldots,v_{h-1}$; see 
\figurename~\ref{fig:fig_points_max} (left).
In order to complete this set of edges to a triangulation, it suffices to consider 
the triangular regions $v_1v_iv_{i+1}$ (for $2\leq i \leq h-1$) 
with at least one point of $S$ in their interior.

\begin{figure}[bt]
\centering
\includegraphics{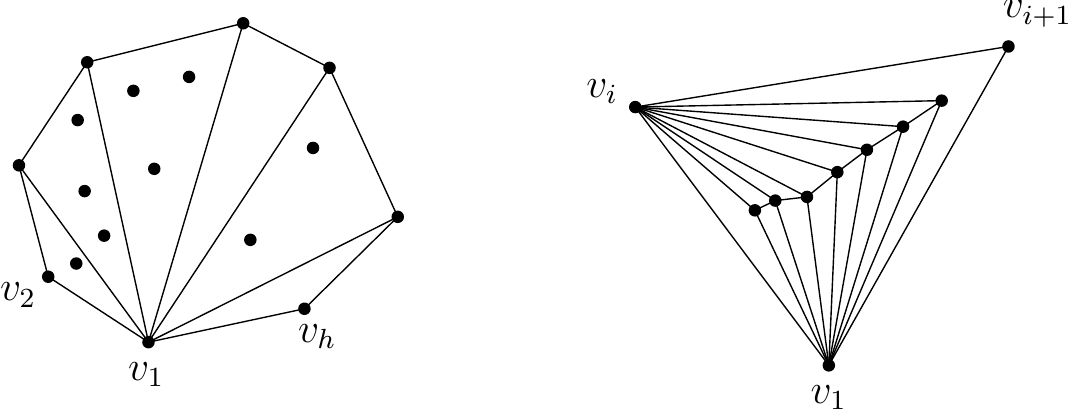}
\caption{Left: $v_1$ is connected to all remaining vertices in the convex hull.
Right: additional edges added inside $\Delta_i$. We connect the points of an increasing subsequence of $v_{j_1},v_{j_2},\ldots,v_{j_{n_i}}$ to both $v_1$, $v_i$ as well as the predecessor and successor in the subsequence.}
\label{fig:fig_points_max}
\end{figure}

Let $\Delta_i= v_1v_iv_{i+1}$ be such a triangular region, $S_i \subset S$ 
the points in the interior of $\Delta_i$, and $n_i=|S_i|$.
Let $w_1, w_2, \dots, w_{n_i}$ denote the points in $S_i$ sorted in 
clockwise order with respect to $v_1$, and $w_{j_1}, w_{j_2}, \dots,
w_{j_{n_i}}$ denote the same points sorted in counterclockwise order 
with respect to $v_i$.
By the Erd\H{o}s-Szekeres theorem~\cite{es_subsequence}, the index sequence $j_k$ 
contains an increasing or decreasing subsequence $\sigma_i$ of length 
at least $\sqrt{n_i}$.

If $\sigma_i$ is increasing, we connect all points of $\sigma_i$ to 
both $v_1$ and $v_i$. In addition, we connect each point of $\sigma_i$ 
to its predecessor and successor in $\sigma_i$; see 
\figurename~\ref{fig:fig_points_max} (right). Since the 
clockwise order with respect to $v_1$ coincides with 
the counterclockwise order with respect to $v_i$, the 
new edges do not create any crossing.

If $\sigma_i$ is decreasing, we claim that the corresponding point 
sequence is in counterclockwise order around $v_{i+1}$.
Indeed, let $w$ and $w'$ be two vertices of $S_i$ whose indices appear 
consecutively in $\sigma_i$ (with $w$ before $w'$). By definition, the 
segment $v_1w'$ crosses the segment $v_iw$. Moreover, points $v_1$ and 
$v_i$ are on the same side of the line through $w$ and~$w'$; see \figurename~\ref{fig:fig_points_subsequence} (left). Since 
$w$ and $w'$ are contained in $\Delta_i$, we conclude that $v_{i+1}$ 
must lie on the opposite side of the line. Thus, $v_i,w,w',v_1$ form a 
counterclockwise sequence around $v_{i+1}$, and
we can connect each point of $\sigma_i$ to $v_1$, $v_{i+1}$, and its 
predecessor and successor in $\sigma_i$ without crossings.
Finally, we add arbitrary edges to complete the resulting graph inside 
$\Delta_i$ into a triangulation $\triang_i$.
\begin{figure}[tb]
\centering
\includegraphics{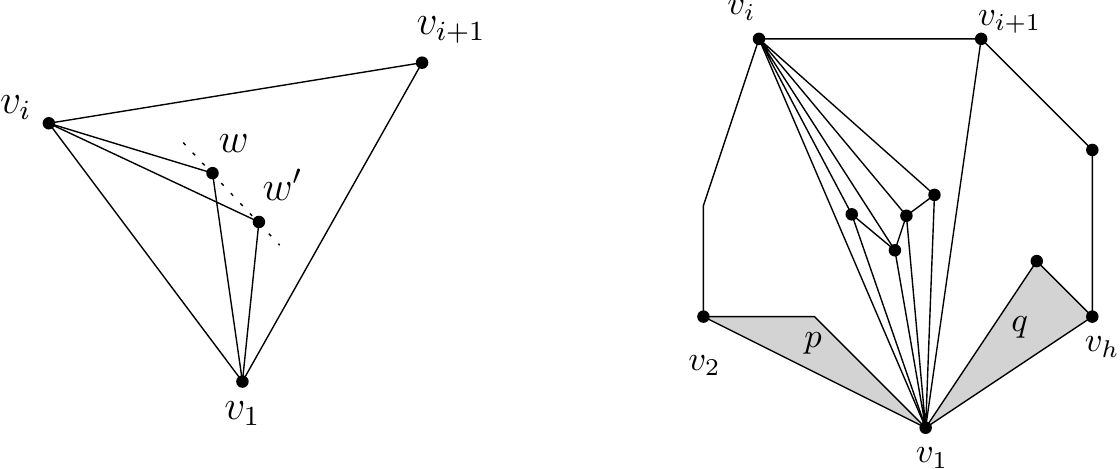}
\caption{Left: if $\sigma_i$ generates a decreasing sequence, the same sequence must be increasing when we view the angles with $v_{i+1}$ instead.
Right: any path between $p$ and $q$ in the dual graph must visit all triangles $\Delta_i$ and at least $\sqrt{n_i}$ additional triangles between the crossing of segments $v_1v_i$ and $v_1v_{i+1}$.}
\label{fig:fig_points_subsequence}
\end{figure}

 We claim that, regardless of how we complete the triangulation, there 
 are two triangles whose distance in the dual graph is at least 
 $\sqrt{n-3}$. 
Let $p$ and $q$ be the triangles of $\triang$ incident to edges $v_1v_2$ 
and $v_1v_{h}$, respectively (since both segments are on the convex hull, 
$p$ and $q$ are uniquely defined). Let $\pi$ be the shortest path from
$p$ to $q$. Clearly, $\pi$ must cross each segment $v_1v_i$, for 
$i\in \{3, \ldots h-1\}$, exactly once and in increasing order.
This gives $h-3$ steps (one step for each triangle 
incident in clockwise order around $v_1$ on an edge 
$v_1v_i$, $i \in \{3, \dots, h-1\}$).
In addition, 
at least $\sqrt{n_i}$ additional triangles must be traversed 
between the 
segments $v_1v_i$ and $v_1v_{i+1}$ (for all $i \in \{2, \ldots, h-1\}$): 
indeed, for each vertex $w \in \sigma_i$, the edges 
$v_1w$ and either $wv_i$ or $wv_{i+1}$ (depending on whether $\sigma_i$ 
was increasing or decreasing) disconnect $p$ and $q$.,
Hence at least one 
of the two must be crossed by $\pi$, and the triangles following these
edges are pairwise distinct and distinct from the triangles following
the segments $v_1v_i$.
Summing over $i$, we get
\[
|\pi| \geq 
h - 3 + \sum_{i=2}^{h-1}\sqrt{n_i} \geq 
h+\sqrt{n-h} - 3 \geq
3+\sqrt{n-3}-3 = 
\sqrt{n - 3}.
\]
In the second inequality we used the fact that 
$\sum_{i=2}^{h-1} \sqrt{n_i} \geq
\sqrt{\sum_{i=2}^{h-1} n_i} = \sqrt{n - h}$. (since a point is 
either on the convex hull or in its interior), the third inequality follows from $h\geq 3$ (and the fact that the expression is minimized when $h$ is as small as possible). 

Finding a longest increasing (or decreasing) subsequence of $n_i$ numbers 
takes $O(n_i \log n_i)$ time, which is optimal in the comparison 
model~\cite{fredman}.
Hence, all subsequences, as well as the whole triangulation can 
be computed in $O(n \log n)$ time,
where the last part also uses the fact that point location in a triangulation 
on $\leq n$ vertices can be done in $O(\log n)$ time after 
$O(n \log n)$ preprocessing.
\end{proof}

\begin{prop}\label{prp:convex_subset}
Any set of $n$ points in the plane in general position with $k$ points in convex position has a triangulation with dual diameter in $\Omega(k)$.
\end{prop}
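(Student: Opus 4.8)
The plan is to build a triangulation of $S$ that contains $\Omega(k)$ pairwise nested \emph{cuts} separating two fixed triangles, and then invoke the standard fact that the dual distance between two triangles equals the minimum number of edges of the triangulation crossed by an arc joining their interiors; crossing $m$ edge-disjoint nested cuts forces dual distance at least $m$. Let $p_1,\dots,p_k$ be the $k$ points in convex position, in convex order, and let $C=\mathrm{conv}(\{p_1,\dots,p_k\})$, a convex $k$-gon with $C\subseteq\mathrm{conv}(S)$. Consider the nested ``parallel'' chords $c_j=p_j\,p_{k+2-j}$ for $j=2,3,\dots,\lfloor k/2\rfloor$; there are $\Omega(k)$ of them, they are pairwise non-crossing, and each is a genuine diagonal of $C$. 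Chord $c_j$ cuts off a cap containing $p_1$ from a central region containing the middle vertices, and the caps grow with $j$. I would fix $A$ to be the triangle incident to $p_1$ and $B$ a triangle in the central region, so that \emph{every} chord $c_j$ separates $A$ from $B$ inside $C$.

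I would then complete this to a triangulation $\triang$ of all of $S$. Force all chords $c_j$ and all edges of $C$ into $\triang$; this is possible because they are pairwise non-crossing and general position forbids a third point in the relative interior of any of these segments. The remaining points of $S$ lying inside $C$ fall into the cells bounded by consecutive chords and are triangulated \emph{within} their cell, so the chords survive as edges. The points outside $C$ lie in the \emph{moat} $M=\mathrm{conv}(S)\setminus C$, which I triangulate with non-crossing ``spokes'', one edge $r_i$ leaving each vertex $p_i$ of $C$ roughly radially outward to the boundary of $\mathrm{conv}(S)$ (if $p_i$ already lies on $\mathrm{conv}(S)$ this spoke is degenerate). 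For each $j$ the concatenation $\widehat c_j = r_j \cup c_j \cup r_{k+2-j}$ is a path of edges running from the boundary of $\mathrm{conv}(S)$ back to it, enclosing the cap side; hence it separates $A$ from $B$ in \emph{all} of $\mathrm{conv}(S)$, not merely inside $C$. Since the chords nest and the spokes point outward in the same angular order, the cuts $\widehat c_2,\widehat c_3,\dots$ are nested, and since each spoke lies in a single cut they are pairwise edge-disjoint. Any arc from $A$ to $B$ — whether it stays inside $C$ or detours through the moat — must therefore cross each of the $\Omega(k)$ cuts, giving $\mathrm{dist}_{\triang^*}(A,B)=\Omega(k)$.

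The conceptual crux, and the step I expect to require the most care, is exactly this extension of the interior chords into \emph{global} cuts: the chords alone do not rule out a short dual path that leaves $C$, travels around the outside, and re-enters near $B$, so one must verify that the spokes can be routed without crossings and that the resulting boundary-to-boundary cuts are genuinely nested and separating. The delicate case is when several vertices $p_i$ lie on $\mathrm{conv}(S)$, which pinches the moat into disconnected pockets; there the corresponding spokes degenerate and the argument simplifies (when $\{p_1,\dots,p_k\}=\mathrm{conv}(S)$ the chords $c_j$ already reach the boundary and are themselves the cuts), but the intermediate mixed configuration needs a careful topological check. Everything else — forcing the non-crossing edges, triangulating the interior points cell by cell, and the final crossing count — is routine, and the construction can be carried out in $O(n\log n)$ time, in the spirit of Theorem~\ref{theo_pointslargeDiam}.
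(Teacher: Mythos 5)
Your construction inside the convex polygon $C$ is essentially the paper's: the paper triangulates $C$ by a zig-zag chain whose long edges $c_ic_{k-i}$ are, up to relabeling, exactly your nested chords. Where you genuinely diverge is in how the region outside $C$ is controlled, which is the real content of the proposition. The paper never builds global cuts: it triangulates $\mathrm{conv}(S)\setminus C$ so that every added edge is incident to a vertex of $C$, and then runs a potential argument --- label each triangle by the index of its incident $C$-vertex nearest (along $C$) to $c_{\lfloor k/2\rfloor}$; a single dual step changes this label by at most $1$, so the triangles inside $C$ at $c_k$ and at $c_{\lfloor k/2\rfloor}$ are at dual distance $\Omega(k)$, and inserting the remaining points of $S$ only refines the triangulation and cannot destroy the bound. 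Your route instead promotes each chord to a boundary-to-boundary cut and counts edge-disjoint separating cuts. Both are valid; the paper's labeling sidesteps the routing problem entirely at the price of a specially constructed moat triangulation, while your cut-counting is topologically more transparent once the spokes exist.

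The step you flag as needing ``a careful topological check'' --- routing pairwise non-crossing spokes --- is indeed the only gap in your write-up, and it closes cleanly, with no ``roughly radial'' analysis. Triangulate the moat $\mathrm{conv}(S)\setminus C$ using only the vertices of $\mathrm{conv}(S)$ and of $C$ (by general position no other point of $S$ lies in the relative interior of such an edge, so these segments are usable as triangulation edges). Since $C$ is convex, no triangle of this moat triangulation can have all three vertices on $C$; hence every vertex $p_i$ of $C$ that is not itself on $\mathrm{conv}(S)$ is joined by an edge of this triangulation to a hull vertex of $S$. Take these edges as the spokes $r_i$, and set $r_i=\emptyset$ when $p_i$ lies on $\mathrm{conv}(S)$ --- this also disposes of your pinched-moat case uniformly. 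Being edges of a single plane graph, the spokes are automatically pairwise non-crossing and disjoint from the interior of $C$; distinct cuts use spokes at distinct vertices of $C$, so the cuts $\widehat{c}_j$ are simple boundary-to-boundary paths, pairwise edge-disjoint, and each separates $A$ from $B$. Forcing the chords, the boundary of $C$, and this moat triangulation, and then inserting all remaining points of $S$ cell by cell, preserves every forced edge, so your count goes through and gives dual distance at least $\lfloor k/2\rfloor-1=\Omega(k)$.
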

\begin{proof}
See \figurename~\ref{fig:fig_convex_k_set} for an accompanying illustration.
Let $S$ be such a point set with $C \subseteq S$ being the convex subset of size~$k$.
First, triangulate only~$C$ by a zig-zag chain of edges: for the convex hull of~$C$ being defined by the sequence $(c_1, \dots, c_k)$, add the edges $c_i c_{k-i}$ and $c_i c_{k-i-1}$ for $1 \leq i < \lfloor k/2 \rfloor$, as well as the boundary of the convex hull of~$C$.
Then, add the extreme points of $S$ and triangulate the convex hull of $S$ without $C$ such that each added edge is incident to a point of $C$ (this is not necessary to obtain the result, but it makes our arguments simpler).
The resulting triangulation has dual diameter $\Omega(k)$,
as is witnessed by the triangles at $c_k$ and $c_{\left \lfloor k/2 \right \rfloor}$ inside $C$: if we label each triangle with the index of the incident point in $C$ that is closest to $c_{\left \lfloor{k/2} \right \rfloor}$, then this index can change by at most~1 along a step in any dual path
between the triangles inside $C$ incident to $c_k$ and
$c_{\left \lfloor k/2 \right \rfloor}$.
The dual diameter does not decrease when adding the remaining points of~$S$ and completing the triangulation arbitrarily.
\end{proof}

\begin{figure}
\centering
\includegraphics{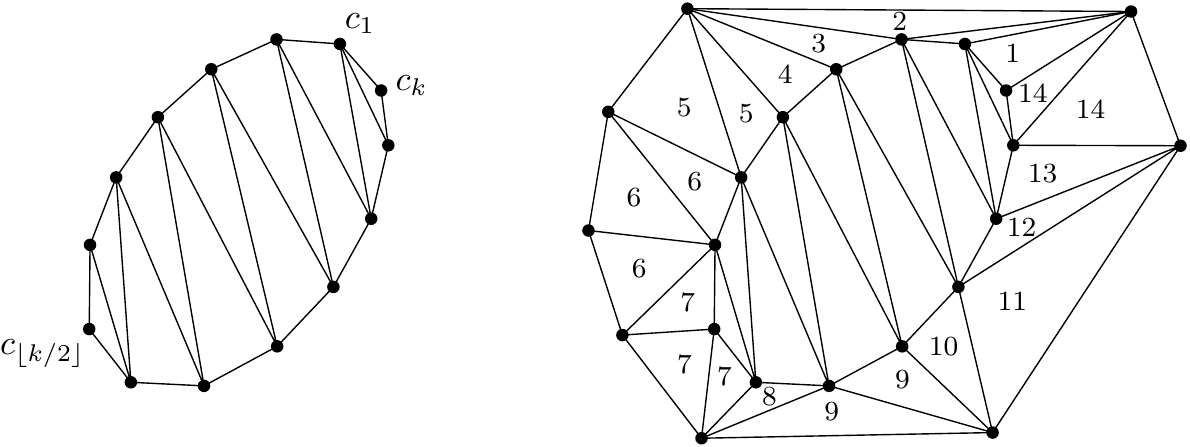}
\caption{Left: zig-zag triangulation of a convex subset of size~$k$.
Right: adding the edges to the extreme points implies a labeling of the triangles that relate to their distance to $c_{\left \lfloor k/2 \right \rfloor}$.}
\label{fig:fig_convex_k_set}
\end{figure}

\section{Conclusions}\label{sec_conclusion}

The proof of Corollary~\ref{cor_lb} (lower bound for simple polygons) 
is essentially based on fundamental properties of graphs (i.e., bounded 
degree) rather than geometric properties.
Since the bound is tight even for the convex case, it cannot be 
tightened in general.
However, we wonder if, by using geometric tools, one can construct a 
bound that depends on the number of reflex vertices of the polygon 
(or interior points for the case of sets of points).
Another natural open problem is to extend our dynamic programming 
approach for simple polygons to general polygonal domains 
(or even sets of points).

It is open whether Theorem~\ref{theo_pointslargeDiam} is tight. That is:
does there exist a point set~$S$ such that the diameter of the dual 
graph of any triangulation of $S$ is in $O(\sqrt{n})$? From Proposition~\ref{prp:convex_subset}, 
we see that any such point set can contain at most $O(\sqrt{n})$ points
in convex position. 
Thus, the point set must have $\Theta(\sqrt n)$ convex hull 
layers, each with $\Theta(\sqrt{n})$ points. We suspect that some smart 
perturbation of the grid may be an example, but we have been unable to 
prove so.  

\section*{Acknowledgements}
Research for this work was supported by the ESF EUROCORES
programme EuroGIGA--ComPoSe, Austrian
Science Fund (FWF): I 648-N18 and grant
EUI-EURC-2011-4306. 
M.~K. was supported in part by the ELC project (MEXT KAKENHI No. 24106008).
S.L. is Directeur de Recherches du FRS-FNRS.
W.M.\@ is supported in part by DFG grants
MU 3501/1 and MU 3501/2.
Part of this work has been done while A.P.\ was recipient of a 
DOC-fellowship
of the Austrian Academy of Sciences at
the Institute for Software Technology,
Graz University of Technology, Austria. 
M.S.\@ was supported by the project LO1506 of the Czech Ministry of Education, Youth and Sports, and by the project NEXLIZ CZ.1.07/2.3.00/30.0038, 
which was co-financed by the European Social Fund and the state budget of the Czech Republic.


\bibliographystyle{abbrv}
\bibliography{bibliography}

\end{document}